\definecolor{darkgreen}{rgb}{0,.35,0}
\definecolor{darkblue}{rgb}{0,0,.5}
\definecolor{darkred}{rgb}{.6,0,0}
\numberwithin{equation}{section}
\newtheorem{theorem}{Theorem}[section]
\newtheorem{corollary}[theorem]{Corollary}
\newtheorem{lemma}[theorem]{Lemma}
\newtheorem{proposition}[theorem]{Proposition}
\newtheorem{definition}[theorem]{Definition}
\newtheorem{fact}[theorem]{Fact}
\newtheorem{example}[theorem]{Example}
\newcommand{\norm}[1]{  \|#1  \|}
\newcommand{\tallnorm}[1]{  \left \|#1 \right \|}
\newcommand{\D}{\partial}
\newcommand{\NN}{{\mathbb{N}}}
\newcommand{\QQ}{{\mathbb{Q}}}
\newcommand{\RR}{{\mathbb{R}}}
\newcommand{\ZZ}{{\mathbb{Z}}}
\newcommand{\degD}{{\deg_{\D}}}
\newcommand{\calF}{{\cal F}}
\newcommand{\calH}{{\cal H}}
\newcommand{\calM}{{\cal M}}
\newcommand{\calS}{{\cal S}}
\newcommand{\Gtil}{{\widetilde G}}
\newcommand{\ftil}{{\widetilde f}}
\newcommand{\gtil}{{\widetilde g}}
\newcommand{\htil}{{\widetilde h}}
\newcommand{\fhat}{{\widehat f}}
\newcommand{\ghat}{{\widehat g}}
\newcommand{\calC}{{\mathcal C}}
\newcommand{\Shat}{{\widehat S}}
\newcommand{\what}{{\widehat w}}
\newcommand{\Sigmabar}{{\overline\Sigma}}
\newcommand{\fstar}{ f^{*}}
\newcommand{\gstar}{g^{*}}
\newcommand{\gstarhat}{ \widehat{ \gstar}}
\newcommand{\fstarhat}{ \widehat{ \fstar}}
\newcommand{\hhat}{ \widehat{ h}}
\newcommand{\false}{{\mbox{\upshape false}}}
\newcommand{\true}{{\mbox{\upshape true}}}
\newcommand{\lclm}{{\mbox{\upshape lclm}}}
\newcommand{\gcrd}{{\mbox{\upshape gcrd}}}
\newcommand{\epgcd}{{\varepsilon{\text{-}}\mbox{\upshape GCD}}}
\newcommand{\tran}{T}
\newcommand{\pvec}[1]{ {\bf {#1} }}
\newcommand{\rconv}[1]{ \calC^{{\bf R}}_{#1} }
\newcommand{\lconv}[1]{ \calC^{ {\bf L}}_{#1} }
\newcommand{\dvec}{{  \overrightarrow{\deg}}}
\renewcommand{\epsilon}{\varepsilon}
\DeclareMathOperator{\lnullspace}{{\mbox{\upshape null}_{\ell}}}
\DeclareMathOperator{\lcoeff}{{\mbox{\upshape lcoeff}}}
\DeclareMathOperator{\content}{{\mbox{\upshape cont}}}
\title{Computing Approximate Greatest Common Right Divisors of Differential
  Polynomials}
\author{
  \parbox{6.7cm}{
    \centering
    Mark Giesbrecht \hspace*{7pt} Joseph Haraldson\\
    \small
    Cheriton School of Computer Science\\
    University of Waterloo\\
    \href{mailto:mwg@uwaterloo.ca} {\{mwg,jharalds\}@uwaterloo.ca}
  } 
  \parbox{5cm}{
    \centering
    Erich Kaltofen\\
    \small
    Dept. of Mathematics\\
    North Carolina State University\\
    \href{mailto:kaltofen@math.ncsu.edu}{kaltofen@math.ncsu.edu}
  }
} 
\begin{document}

\maketitle

\begin{abstract}
  Differential (Ore) type polynomials with ``approximate'' polynomial
  coefficients are introduced.  These provide an effective notion
  of approximate differential operators, with a strong algebraic
  structure. We introduce the approximate Greatest Common Right Divisor
  Problem (GCRD) of differential polynomials, as a non-commutative
  generalization of the well-studied approximate GCD problem.

  Given two differential polynomials, we present an algorithm to find
  \emph{nearby} differential polynomials with a non-trivial GCRD, where
  \emph{nearby} is defined with respect to a suitable coefficient norm.
  Intuitively, given two linear differential polynomials as input, the
  (approximate) GCRD problem corresponds to finding the (approximate)
  differential polynomial whose solution space is the intersection of the
  solution spaces of the two inputs.

  The approximate GCRD problem is proven to be locally well-posed.  A method
  based on the singular value decomposition of a differential Sylvester matrix
  is developed to produce an initial approximation of the GCRD.  With a
  sufficiently good initial approximation, Newton iteration is shown to converge
  quadratically to an optimal solution. Finally, sufficient conditions for
  existence of a solution to the global problem are presented along with
  examples demonstrating that no solution exists when these conditions are not
  satisfied.

\end{abstract}

%
%
%




\section{Introduction}
\label{sec:intro}

The problem of computing the GCRD in a symbolic and exact setting dates back to
\citet{Ore33}, who presents a Euclidean-like algorithm.  See
\citep{BroPet94} for an elaboration of this approach.  \citet{Li97} introduces a
differential-resultant-based algorithm which makes computation of the GCRD very
efficient using modular arithmetic.  The technique of \citet{Li97} is
an extension of ideas presented by \citet{grigorev90} for computing GCRDs of
differential operators.

The analogous approximate GCD problem for usual (commutative)
polynomials has been a key topic of research in symbolic-numeric computing since
its inception.  A full survey is not possible here, but we note the deep
connection between our current work and that of \citep{CGTW95}; see also
\citep{KarLak96}, \citep{SasSas97}, and \citep{ZenDay04}.  Also important to
this current work is the use of so-called structured (numerical) matrix methods
for approximate GCD, such as structured total least squares (STLS) and
structured total least norm (STLN); see \citep{BGM05} and \citep{KYZ05}.
More directly employed later in this paper is the multiple polynomial
approximate GCD method of \citet{KYZ06}.  This latter paper also provides a nice
survey of the current state of the art in approximate GCDs.  Finally, we modify
the proof of \cite{KYZ07b}, an optimization approach to computing the GCD of
multiple, multivariate \emph{commutative} polynomials, to prove the existence of
a globally nearest GCRD.


The goal of this paper is to devise an efficient, numerically robust algorithm
to compute the GCRD when the coefficients in $\RR$ are given approximately.
Given $f,g \in \RR(t)[\D;']$, we wish to find
$\ftil,\gtil\in\RR(t)[\D;']$, where $\ftil$ is \emph{ near } $f$ and $\gtil$ is
\emph{ near } $g$, such that $\degD \gcrd(\ftil,\gtil)\geq 1$, where \emph{near}
is taken with respect to a distributed Euclidean norm.  That is, $\ftil$ and
$\gtil$ have an exact, non-trivial GCRD.

Linear differential polynomials and GCRD's are key tools in finding closed form
symbolic solution of systems of linear differential equations in modern computer
algebra systems like Maple and Mathematica (see, e.g., \citep{SalZim94} and
\citep{AbrLeLi05}).  Equations with real (floating point) coefficients or
parameters are regularly encountered and it is important to understand the
stability of this fundamental tool in this case.  Moreover, floating arithmetic
is potentially much faster than managing large rational coefficients. We regard
this paper as a positive and important initial exploration of this topic.

We commence with necessary preliminaries and well-known results that we expand
upon in the remainder of this introductory section.  In
Section~\ref{sec:lin-alg} we describe a linear algebra formulation of the
approximate GCRD problem and that can be used in conjunction with truncated SVD
\citep{GiesbrechtHaraldson14,Haraldson15,CGTW95} to compute nearby polynomials
with an exact GCRD. Section~\ref{sec:opt-gcrd} reformulates the approximate GCRD
problem as a continuous unconstrained optimization problem. Sufficient
conditions for existence of a solution are provided with an example showing that
when this sufficient condition is not satisfied there is no solution. These
results are complemented by showing that the Jacobian of the residuals has full
rank and under ideal circumstances Newton iteration will converge quadratically.
We generalize some results of \cite{ZenDay04} and \cite{Zeng11} to a
non-commutative Euclidean domain showing that the problem is locally well-posed.
In Section~\ref{sec:implementation} we present our algorithms explicitly,
discuss their complexity and evaluate the numerical robustness of our
implementation on examples of interest.

A part of this work, presenting the SVD-based approach to approximate GCRD, but
without the proof of existence a nearest solution or analysis of the
corresponding optimization, is presented in the workshop paper
\citep{GiesbrechtHaraldson14}.  This is described in Section
\ref{sec:algs-for-gcrd} of this current work.%
\subsection{Preliminaries}
We review some well known results \citep{Ore33} and \citep{BroPet96} on
differential polynomials.

The ring of differential (Ore) polynomials $\RR(t)[\D;']$ over the real numbers
$\RR$ provides a (non-commutative) polynomial ring structure to the linear
ordinary differential operators.  Differential polynomials have found great
utility in symbolic computation, as they allow us to apply algebraic tools to
the simplification and solution of linear differential equations; see
\citep{BroPet94} for a nice introduction to the mathematical and computational
aspects.

Let $\RR(t)[\D;']$ be the ring of differential polynomials over the function
field $\RR(t)$. $\RR(t)[\D;']$ is the ring of polynomials in $\D$ with
coefficients from the commutative field of rational functions, under the usual
polynomial addition along with the non-commutative multiplication defined by
\[
\D y(t) = y(t) \D + y'(t) \text{ for } y(t) \in \RR(t).
\]
Here $y'(t)$ is the usual derivative of $y(t)$ with respect to $t$.

There is a natural action of $\RR(t)[\D;']$ on the space $\cal{C}^\infty[\RR]$
of infinitely differentiable functions $y(t):\RR\to \RR$. In particular, for any
$y(t) \in \cal{C}^\infty[\RR]$,
\[
f(\D) = \sum_{0\leq i \leq M} f_i(t)\D^i \text{ acts on $y(t)$ as } \sum_{0\leq
  i \leq M} f_i(t) \frac{d^i}{dt^i}y(t).
\]
We maintain a right canonical form for all $f\in \RR(t)[\D;']$ by writing
\begin{equation}
  \label{eq:fcanon}
  f = \frac{1}{f_{-1}}\sum_{0\leq i \leq M} f_i \D^i,
\end{equation}
for polynomials $f_{-1},f_0,\ldots, f_M \in \RR[t].$ That is, with coefficients
in $\RR(t)$ always written to the left of powers of $\D$.
An analogous left canonical form exists as well.

A primary benefit of viewing differential operators in this way is that they
have the structure of a left (and right) Euclidean domain.  In particular, for
any two polynomials $f,g \in \RR(t)[\D;']$, there is a unique polynomial
$h\in\RR(t)[\D;']$ of maximal degree in $\D$ such that $f=\fstar h$ and
$g= \gstar h$ for $\fstar,\gstar \in\RR(t)[\D;']$ (i.e., $h$ divides $f$ and $g$
exactly on the right).  This polynomial $h$ is called the Greatest Common Right
Divisor (GCRD) of $f$ and $g$ and it is unique up to multiplication by a unit
(non-zero element) of $\RR(t)$ (we could make this GCRD have leading coefficient
1, but this would introduce denominators from $\RR[t]$, as well as potential
numerical instability, as we shall see).
An important geometric interpretation of GCRDs is that the GCRD $h$ of
differential polynomials $f$ and $g$ is a differential polynomial whose solution
space is the intersection of the solution spaces of $f$ and~$g$.


Approximations require a norm, so we need a proper definition of the {\emph
  {norm}} of a differential polynomial.
\begin{definition}
  We define the \emph{Euclidean norm} for polynomials and a \emph{distributed
    coefficient norm} for differential polynomials as follows:
  \begin{itemize}
  \item[1.] For $p=\sum_{0\leq i\leq d} p_it^i\in\RR[t]$, define
    \[ \norm{p}=\norm{p}_2 = \left (\sum_{0\leq i\leq d} p_i^2\right )^{1/2}. \]
  \item[2.] For $f=\sum_{0\leq i \leq M} f_i \D^i \in \RR[t][\D;']$, define
    \[\norm{f}=\norm{f}_2 = \left (\sum_{0\leq i\leq M} \norm{f_i}_2^2
    \right ) ^{1/2}.\]
  \end{itemize}
\end{definition}

We could extend the above definition of norm over $\RR(t)$ and $\RR(t)[\D;']$.
However it turns out that this is unnecessary and somewhat complicating. In
practice, we perform most of our computations over $\RR[t]$. In the cases where
we are unable to avoid working over $\RR(t)$, we simply solve an associate
problem. This is done by clearing denominators and performing intermediate
computations over $\RR[t]$, then  converting back to the representation over
$\RR(t)$.  Note that the algebraic problem is always computing GCRDs and
cofactors in $\RR(t)[\D;']$, and not the more intricate algebraic domain
$\RR[t][\D;']$; see the discussion below.

\begin{definition}
  For any matrix $S\in\RR[t]^{(M+N) \times (M+N)}$, we define the Frobenius norm
  $\norm{S}_F$ by
  \[
  \norm{S}_F^2 = \sum_{ij} \norm{S_{ij}}^2.
  \]
\end{definition}

%
%
%
\medskip\noindent
\textbf{Main Problem: Approximate GCRD.}~
  Given $f,g\in\RR[t][\D;']$ such that $\gcrd(f,g) =1$ we wish to compute
  $\ftil, \gtil \in \RR[t][\D;']$ with the same coefficient degree
structure\footnote{The polynomial coefficients  of $\D^i$ have the same degree,
i.e. $\deg \ftil_i \leq  \deg f_i$ and $\deg \gtil_i \leq g_i$.} as
$f$ and $g$ such that $h=\gcrd(\ftil,\gtil) $ with
  $D=\degD h \geq 1$ and
  \begin{itemize}
  \item [(i)] $\norm{f-\ftil}^2_2 + \norm{g-\gtil}^2_2 = \varepsilon$  is
		minimized, and
  \item [(ii)] $D$ is the largest possible for the computed distance
		$\varepsilon$.
  \end{itemize}
  The differential polynomial $h$ is said to be an \emph{approximate GCRD} of $f$ and
  $g$ if these conditions are satisfied. In general it is not easy to
  minimize $\varepsilon$, so instead we take a local optimization
approach and compute an upper bound on this quantity. These upper-bounds will
agree with the global minimum if $\varepsilon$ is sufficiently small. The
algorithmic considerations will
generally assume $D$ is fixed without loss of generality, since we can vary $D$
from 1 to $\min\{M,N\}$ to determine the (local) optimal value.
\bigskip

The approximate GCRD problem is a generalization of computing an $\epgcd$
\citep{Schonhage85,CGTW95,KarLak96,EmiGalLom97} in the commutative case.  The
requirement that the GCRD has maximal degree is difficult to certify outside the
exact setting, however this usually is not a problem in our experiments. We
prove that our formulation of the approximate GCRD problem has a solution with a
minimal $\varepsilon$ (opposed to an infimum). Furthermore, if $D$ is fixed,
then for a computed pair of nearby differential polynomials, we are able to
certify that $\varepsilon$ is reasonably close to the optimal value through a
condition number.

In our approach to the approximate GCRD problem we devise methods of performing
division and computing an exact GCRD numerically. These tools are used in
conjunction with our algorithm for computing a nearby pair of differential
polynomials with an exact GCRD via the SVD. The nearby differential polynomials
with an exact GCRD are used as an initial guess in a post-refinement Newton
iteration.

It will also be necessary to define a partial ordering on differential
polynomials. In later sections we will need to make use of this partial ordering
to preserve structure.
\begin{definition}\label{defn:degree-vector}
  Let $\dvec:\RR[t][\D;']\to \ZZ^{M+1}$ be the degree vector function defined as
  \[\dvec(f) = (\deg_t f_0, \deg_t f_1, \ldots, \deg_t f_M), \text{ for }
  f_0,\ldots, f_{M} \in \RR[t].\]
  For $f,g \in \RR[t][\D;']$ with $\degD f = \degD g=M$ we write
  \[ \dvec(f) < \dvec(g) \text{ if } \deg_t f_i \leq \deg_t g_i \text{ for }
  0\leq i \leq M.\]
  We define $\dvec(f)=\dvec(g), \dvec(f)<\dvec(g), \dvec(f)\geq \dvec(g)$ and
  $ \dvec(f) > \dvec(g)$ analogously.
\end{definition}

We note that differential polynomials are written in a canonical ordering with
highest degree coefficients appearing to the left in our examples.  The degree
vector function and most linearizations will appear in reverse order as a
result. For convenience, we will assume that
$\deg 0 = -\infty$.


\begin{definition}
  Let $f \in \RR[t][\D;']$ where $\degD f = M$ is in standard form. The \emph{content}
  of $f$ is given by $\content(f) = \gcd(f_0,f_1,\ldots,f_M)$.  If
  $\content(f) = 1$, we say that the differential polynomial is \emph{primitive}.
\end{definition}
%

\begin{proposition}
  \label{prop:diff-poly-basics}
  The ring $\RR(t)[\D;']$ is a non-commutative principal left (and right) ideal
  domain.  For $f,g \in \RR(t)[\D;']$, with $\degD f=M$ and $\degD g=N$, we have
  the following properties \citep{Ore33}.

  \begin{enumerate}
  \item[(i)] $\degD (fg) = \degD f + \degD g$ ,
    $\degD(f+g) \leq \max \{ \degD f, \degD g\}$.
  \item[(ii)] There exist unique $q,r\in \RR(t)[\D;']$ with $\degD r < \degD g$ such
    that $f=qg+r$ (right division with remainder).
  \item[(iii)] There exists $h \in \RR(t)[\D,']$ of maximal degree in $\D$ with
    $f=\fstar h$ and $g=\gstar h$.  $h$ is called the GCRD (Greatest Common
    Right Divisor) of $f$ and $g$, written $\gcrd(f,g)=h$.  $\fstar$ and
    $\gstar$ are called the left co-factors of $f$ and $g$. The GCRD is unique
    up to multiplication from a unit belonging to $\RR(t)$.
  \item[(iv)] There exist $\sigma,\tau \in \RR(t)[\D;']$ such that
    $\sigma f= \tau g = \ell$ for $\ell$ of minimal degree. $\ell$ is called the
    LCLM (Least Common Left Multiple) of $f$ and $g$, written
    $\lclm(f,g) = \ell$.  The LCLM is unique up to multiplication from a unit
    belonging to $\RR(t)$.
  \item[(v)] $\degD \lclm(f,g) = \degD f + \degD g - \degD \gcrd(f,g)$.
  \end{enumerate}
\end{proposition}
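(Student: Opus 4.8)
The plan is to establish that $\RR(t)[\D;']$ is a (left and right) Euclidean domain, from which the principal ideal domain structure and properties (iii)--(v) follow by the standard non-commutative adaptation of the Euclidean algorithm. First I would verify the degree law (i): since $\RR(t)$ is a field (hence a domain) and the commutation rule $\D y = y\D + y'$ does not lower $\D$-degree, multiplying $f = \sum f_i \D^i$ by $g = \sum g_j \D^j$ produces a leading term $f_M g_N \D^{M+N}$ with nonzero coefficient $f_M g_N$, so $\degD(fg) = \degD f + \degD g$; the inequality $\degD(f+g) \le \max\{\degD f, \degD g\}$ is immediate. A consequence is that $\RR(t)[\D;']$ has no zero divisors and its units are exactly the nonzero elements of $\RR(t)$ (degree-$0$ elements that are invertible).

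Next I would prove (ii), right division with remainder, by induction on $\degD f$. If $\degD f < \degD g$ take $q = 0$, $r = f$. Otherwise, writing $f = f_M \D^M + \cdots$ and $g = g_N \D^N + \cdots$ with $M \ge N$, the term $(f_M/g_N)\D^{M-N} \cdot g$ has leading term $f_M \D^M$ (again using that the commutation rule preserves the top $\D$-degree and the top coefficient $f_M/g_N \cdot g_N = f_M$), so $f - (f_M/g_N)\D^{M-N} g$ has strictly smaller $\D$-degree; apply the induction hypothesis and collect terms. Uniqueness follows because if $q_1 g + r_1 = q_2 g + r_2$ then $(q_1 - q_2) g = r_2 - r_1$ has $\D$-degree $< \degD g$, forcing $q_1 = q_2$ by (i), hence $r_1 = r_2$. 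I would remark that a symmetric argument gives left division, using that $\RR(t)$ is a field so leading coefficients can also be divided on the right.

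With a right-division algorithm in hand, (iii)--(v) are the usual Euclidean-domain consequences transported to the non-commutative setting. For (iii), run the right Euclidean algorithm $r_0 = f$, $r_1 = g$, $r_{k-1} = q_k r_k + r_{k+1}$; the sequence $\degD r_k$ strictly decreases so it terminates with some $r_{s+1} = 0$, and $h := r_s$ right-divides every earlier $r_k$ (downward induction) hence divides $f$ and $g$; conversely any common right divisor of $f,g$ divides each $r_k$ (upward induction), in particular $h$, so $h$ has maximal $\D$-degree among common right divisors and any two such are associates, i.e.\ differ by a unit in $\RR(t)$; back-substitution yields the co-factors $\fstar,\gstar$. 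For the left-ideal statement, any left ideal $I$ is generated by any element of minimal $\D$-degree in $I$, again by right division; symmetrically for right ideals. For (iv), the LCLM $\ell$ generates the left ideal $\RR(t)[\D;']f \cap \RR(t)[\D;']g$ (nonzero since, e.g., a suitable right multiple of $\lclm$ exists by a dimension count, or directly because the Euclidean algorithm's Bézout-type relations give a common left multiple), and minimality/uniqueness-up-to-unit follow as in (iii). Finally (v) is a degree count: from $\sigma f = \ell$ and $f = \fstar h$, $g = \gstar h$ one shows $\ell = \sigma \fstar h$ with $\degD \ell = \degD \sigma + \degD f$ and, by analyzing $\sigma \fstar = \tau \gstar$, that $\degD \sigma = \degD g - \degD h$, giving $\degD \ell = \degD f + \degD g - \degD h$. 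The main obstacle is purely bookkeeping: one must be careful that in the non-commutative setting multiplication by $\D$-powers and the commutation rule never disturb the top $\D$-degree or its coefficient, so every ``leading term cancellation'' step in both the division algorithm and the Euclidean algorithm is valid; once that is checked, the classical proofs carry over verbatim. Since these facts are classical (\citep{Ore33}), I would present the argument tersely, emphasizing only the degree law and the division step and citing Ore for the remainder.
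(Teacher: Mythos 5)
Your proposal is correct and follows essentially the same route the paper relies on: the paper gives no proof of its own but cites Ore (1933), and what you reconstruct — the degree law, right division with remainder, the right Euclidean algorithm for the GCRD, and the principal-ideal/LCLM consequences — is precisely Ore's classical Euclidean-domain argument. The only thin spots are the existence of a nonzero common left multiple (your dimension count over $\RR(t)$ on $\{\D^i f\}\cup\{\D^j g\}$ is the right fix; the Bézout remark alone would not suffice) and the degree formula (v), where ``analyzing $\sigma \fstar = \tau \gstar$'' needs a genuine argument (e.g.\ the module isomorphism $(Af+Ag)/Ag \cong Af/(Af\cap Ag)$ with left $\RR(t)$-dimension counts, or tracking cofactor degrees in the extended Euclidean scheme), but since you explicitly defer these classical details to Ore, exactly as the paper does, this is acceptable.
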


In an algebraic context we can clear denominators of our inputs and assume
without loss of generality that our GCRD belongs to $\RR[t][\D;']$. We will also
assume our inputs and output are primitive.  Again, this is not algebraically
necessary but will be important for the convergence of our subsequent
optimization formulation (see Section \ref{sec:newtconv}).
It is important to note that the co-factors of the GCRD need not belong to
$\RR[t][\D;']$ even if we have $f,g,h \in \RR[t][\D;']$ such that
$\gcrd(f,g)=h$.  This is not unexpected, as a similar situation occurs when
computing GCD's over $\ZZ[x]$, where cofactors in the GCD of primitive
polynomials may well lie in $\QQ[x]\setminus\ZZ[x]$.  In essence, this is a
computational technique to narrow the input domain, not a change to the problem
being considered.

A related but considerably more difficult problem is computing ideal bases and
factorizations completely within $\RR[t][\D;']$.  This has been dealt with
algebraically and in terms of exact computation by a number of authors, though
not with respect to approximate coefficients; see for example
\citep{HeiLev16,GieHeiLev16,BelHeiLev17}.



Most of our results involve transforming a representation of $f\in \RR(t)[\D;']$
into a representation over $\RR(t)^{1\times K}$ for $K \geq \degD f$.  We make
extensive use of the following map.

\begin{definition}
  For $f\in \RR(t)[\D;']$ of degree $M$ in $\D$ as in \eqref{eq:fcanon}, and
  $K>M$, we define
  \[
  \Psi_{K}(f) = \frac{1}{f_{-1}} (f_0,
  f_1,\ldots,f_M,0,\ldots,0)\in\RR(t)^{1\times K}.
  \]
  That is, $\Psi_{K}$ maps polynomials in $\RR(t)[\D;']$ of degree (in $\D$)
  less than $K$ into $\RR(t)^{1\times K}$.
\end{definition}

It will be useful to linearize (differential) polynomials, that is, express them
as an element of Euclidean space.  For $p\in \RR[t]$ with $\deg_tp = d$ we write
\[
\pvec p = (p_0,p_1,\ldots, p_d) \in \RR^{1\times (d+1)}
\]
For $f=f_0+f_1\D+\cdots f_M\D^M\in\RR[t][\D;']$ with $\degD f=M$ and
$\deg_t f_i=d_i$ we write
\[
\pvec{f} = (\pvec {f_0},\ldots, \pvec {f_M}) \in \RR^{1\times L},
\]
where $L=(d_0+1)+\cdots+(d_M+1)$.  If $d\geq \max\{d_i\}$ we will sometimes pad
each $\pvec{f_i}$ with zeros to have precisely $d+1$ coefficients, and by a
slight abuse of notation regard
\[
\pvec{f}\in\RR^{1\times (M+1)(d+1)}.
\]
We will not do this unless specifically stated.

%
%
%


\section{Computing the GCRD via Linear Algebra}
\label{sec:lin-alg}
\label{ssec:diff-sylvester}

In this section we demonstrate how to reduce the computation of the GCRD to that
of linear algebra over $\RR(t)$, and then over $\RR$ itself.  This approach has
been used in the exact computation of GCRDs \citep{Li97} and differential
Hermite forms \citep{GieKim13}, and has the benefit of reducing differential,
and more general Ore problems, to a system of equations over a commutative
field.  Here we will show that it makes our approximate version of the GCRD
problem amenable to numerical techniques.  We note that for computing
approximate GCRDs of differential polynomials, much as for computing approximate
GCDs of standard commutative polynomials, the Euclidean algorithm is numerically
unstable, and thus we employ resultant-based techniques, as described below.

Since $\RR(t)[\D;']$ is a right (and left) Euclidean domain \citep{Ore33}, a GCRD
may be computed by solving a Diophantine equation corresponding to the B\'ezout
coefficients.  Using the subresultant techniques of \citet{Li98}, we are able to
transform the non-commutative problem over $\RR(t)[\D;']$ into a commutative
linear algebra problem over $\RR(t)$. This is done through a Sylvester-like
resultant matrix. By using resultant-like matrices we are able to express the
B\'ezout coefficients as a linear system over $\RR(t)$ and compute a GCRD via
nullspace basis computation.
%


\begin{lemma}
  \label{lem:nontrivgcrd}
  Suppose $f,g \in \RR(t)[\D;']$ with $\degD f=N$ and $\degD g=M$.  Then
  $\degD \gcrd(f,g) \geq 1$ if and only if there exist $u,v \in \RR(t)[\D;']$
  such that $\degD u< M$, $\degD v< N$, and $uf+vg = 0$.
\end{lemma}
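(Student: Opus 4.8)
The plan is to recast the condition ``$uf+vg=0$ with small degrees'' as ``$f$ and $g$ admit a nonzero common left multiple of degree at most $N+M-1$'', and then read everything off from Proposition~\ref{prop:diff-poly-basics}, parts (i), (iv) and (v). Concretely, the chain of equivalences I would establish is: a nontrivial relation $uf+vg=0$ with $\degD u<\degD g$ and $\degD v<\degD f$ exists $\iff$ $f,g$ have a nonzero common left multiple $w$ with $\degD w\le N+M-1$ $\iff$ $\degD\lclm(f,g)\le N+M-1$ $\iff$ $\degD\gcrd(f,g)\ge 1$ (the last step being exactly the identity $\degD\lclm(f,g)=N+M-\degD\gcrd(f,g)$ of Proposition~\ref{prop:diff-poly-basics}(v)). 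Throughout I would use that $f,g$ are nonzero and that $\RR(t)[\D;']$ is a domain, so that leading terms multiply and right cancellation is available.

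For the forward direction, suppose $h=\gcrd(f,g)$ with $\degD h\ge 1$. By Proposition~\ref{prop:diff-poly-basics}(iv) there are $\sigma,\tau\in\RR(t)[\D;']$ with $\sigma f=\tau g=\lclm(f,g)=:\ell$, and $\ell\neq 0$ because $f,g\neq 0$; set $u=\sigma$ and $v=-\tau$, so that $uf+vg=\sigma f-\tau g=0$ and $(u,v)\neq(0,0)$ (indeed $u\neq0$ since $\ell\neq0$). By Proposition~\ref{prop:diff-poly-basics}(v), $\degD\ell=N+M-\degD h\le N+M-1$, and by the degree additivity of multiplication (Proposition~\ref{prop:diff-poly-basics}(i)), $\degD u=\degD\ell-N$ and $\degD v=\degD\ell-M$; since $\degD h\ge 1$ these are strictly below the required bounds. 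This produces the desired $u,v$.

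For the converse, suppose $uf+vg=0$ with $(u,v)\neq(0,0)$ and the stated degree bounds. First rule out the degenerate case: if $uf=0$ then $u=0$ (as $f\neq0$ in a domain), hence $vg=0$ and $v=0$, contradicting $(u,v)\neq(0,0)$; so $w:=uf=-vg\neq 0$. Both $f$ and $g$ right-divide $w$, so $w$ is a nonzero common left multiple of $f$ and $g$, and therefore $\lclm(f,g)$ right-divides $w$, giving $\degD w\ge\degD\lclm(f,g)=N+M-\degD\gcrd(f,g)$ by Proposition~\ref{prop:diff-poly-basics}(v). On the other hand $\degD w=\degD u+N=\degD v+M$, and the degree bounds on $u$ and $v$ force $\degD w\le N+M-1$. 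Comparing the two estimates yields $\degD\gcrd(f,g)\ge 1$, as required.

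I expect no serious obstacle here: the only substantive content is the degree bookkeeping, i.e.\ the observation that a bounded nontrivial syzygy $uf+vg=0$ is precisely the data of a nonzero common left multiple of degree $<N+M$, which exists exactly when the GCRD is nontrivial by the $\lclm$/$\gcrd$ degree identity. The one non-automatic modelling choice is to build $u,v$ from the \emph{cofactors of the $\lclm$} (via Proposition~\ref{prop:diff-poly-basics}(iv),(v)) rather than trying to manipulate the $\gcrd$ cofactors $\fstar,\gstar$ directly, since the latter do not yield a relation $uf+vg=0$ in the noncommutative setting; the minor points one must not skip are the degenerate case $w=0$ and the appeals to the domain property for cancellation and for concluding $u=0\Rightarrow uf\neq0$ fails only trivially.
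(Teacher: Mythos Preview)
Your argument is correct and is exactly the intended expansion of the paper's one-line proof (``follows immediately from Proposition~\ref{prop:diff-poly-basics}''), using the $\lclm$ cofactors for the forward direction and the degree identity~(v) for the converse. One caveat worth recording: you read the constraints as $\degD u<\degD g$ and $\degD v<\degD f$, which in the lemma's own labels ($\degD f=N$, $\degD g=M$) would be $\degD u<M$, $\degD v<N$ rather than the printed $\degD u<N$, $\degD v<M$; your reading is the intended one (it agrees with the restatement immediately after the lemma, where $\degD f=M$, $\degD g=N$), and indeed with the bounds as literally printed the forward implication fails whenever $\degD\gcrd(f,g)\le|M-N|$.
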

\begin{proof}
  This follows immediately from Proposition \ref{prop:diff-poly-basics}.
  \qed
\end{proof}

Using Lemma~\ref{lem:nontrivgcrd} we can solve a B\'ezout-like system to compute
a GCRD of two differential polynomials. This is characterized by the
differential Sylvester matrix, based on the subresultant method of \citet{Li97}.

\begin{definition}
Suppose $h\in\RR[t][\D;']$ has $\degD=D$.    For any $K\in\NN$, the matrix
  \[
  \rconv{K}(h) =
  \begin{pmatrix}
    \Psi_{K+D+1}(h) \\
    \Psi_{K+D+1}(\D h) \\
    \vdots \\
    \Psi_{K+D+1}(\D^{K} h)
  \end{pmatrix}
  \in \RR[t]^{(K+1) \times (K+D+1)}
  \]
  is the $K^{th}$ \emph{right differential convolution matrix} of $h$.  We note
  that the entries of $\rconv{K}(h)$ are written in their {\emph{right canonical
      form}}, where the $\D$'s appear to the right (polynomials in
$\RR[t]$ appear to the
  left).  We note that $\deg_t \D^i h = \deg_t h$, so the degree in $t$ of all
  entries of $\rconv{K}(h)$ is at most $\deg_t h$.

  We analogously define the $K^{th}$ \emph{left differential convolution matrix}
  of $h$ as $\lconv{K}(h)$ as
  \[
  \lconv{K}(h) = \begin{pmatrix}
    \Psi_{K+D+1}(h) \\
    \Psi_{K+D+1}( h\D) \\
    \vdots \\
    \Psi_{K+D+1}(h \D^{K} )
  \end{pmatrix} \in \RR[t]^{(K+1) \times (K+D+1)},
  \]
  where elements are written in their {\emph{left canonical form}}, where the
  $\D$'s appear to the left (polynomials in $\RR[t]$ always appear to the
right).
\end{definition}

Both right and left differential convolution matrices can be used to perform
multiplication.  Suppose $f^*\in\RR(t)[\D;']$, $h\in\RR(t)[\D;']$ and
$f=\fstar h\in\RR[t][\D;']$, with
\begin{equation}
\label{eq:fhnum}
f = \sum_{0\leq i \leq M} f_i \D^i, \; \fstar = \sum_{0\leq i \leq M-D} \fstar_i
\D^i \text{ and } h = \sum_{0\leq i \leq D} h_i \D^i,
\end{equation}
with $f_i,h_i\in \RR[t]$ and $\fstar_i \in \RR(t)$.  We can express the product
of $\fstar$ and $h$ as
\[
(f_0,f_1,\ldots, f_{M}) = (\fstar_0,\ldots, \fstar_{M-D}) \rconv{M-D}(h).
\]
Similarly, we may write
\[
(f_0,f_1,\ldots, f_M)^\tran = \lconv{D}(\fstar) (h_0,h_1,\ldots,h_D)^\tran.
\]
In keeping with our canonical ordering, we express our results in terms of right
differential convolution matrices.  We carefully observe that both the right and
left differential convolution matrices described correspond to {\emph{right
    multiplication}}. Left multiplication can be formulated in a similar manner.

Let $f,g \in \RR(t)[\D;']$ with $\degD f = M$ and $\degD g =N$. Then by
Lemma~\ref{lem:nontrivgcrd} we have that $\degD \gcrd(f,g) \geq 1$ if and only
if
there exist $u,v \in \RR(t)[\D;']$ such that $\degD u < N, \degD v < M$ and
$uf+vg =0$.  We can encode the existence of $u,v $ as an $(M+N)\times (M+N)$
matrix over $\RR(t)$ in what we will call the differential Sylvester matrix.

\begin{definition}
  The matrix
  \[
  S = S(f,g) =
  \begin{pmatrix}
    \rconv{N-1}(f) \\
    \rconv{M-1}(g)
  \end{pmatrix}
  \in \RR(t)^{(M+N) \times (M+N)}
  \]
  is the differential Sylvester matrix of $f$ and $g$.
\end{definition}

This matrix \citep{Li97} is analogous to the Sylvester matrix of real
polynomials; see \cite[Chapter~6]{MCA3}. As expected, many useful properties
of
the Sylvester matrix over real polynomials still hold with the differential
Sylvester matrix.  These similarities become evident when we consider
\[
w = (u_0,u_1,\ldots, u_{N-1},v_0,v_1,\ldots, v_{M-1}) \in \RR(t)^{1\times
  (M+N)}.
\]
Then $uf+vg=0$ implies that $wS=0$, hence $w$ is a non-trivial vector in the
(left) nullspace of $S$. In particular, this solution is equivalent to saying
that $S$ is singular.  Clearing denominators of $f$ and $g$, we may assume that
$u,v \in \RR[t][\D;']$, i.e., they have polynomial coefficients, which implies
that $S\in \RR[t]^{ (M+N) \times (M+N)}$.  Moreover, for $f,g \in \RR[t][\D;']$
with $\deg_t f \leq d$ and $\deg_t g \leq d$ then $\deg_t S_{ij} \leq d$.

We summarize these results in the following lemma.
\begin{lemma}
  \label{lem:RRtGCRD}
  Suppose $f,g\in\RR[t][\D;']$, where $\degD f=M$, $\degD g=N$, $\deg_t f\leq d$
  and $\deg_t g\leq d$.
  \begin{enumerate}
  \item[(i)] $S=S(f,g)$ is singular if and only $ \degD \gcrd(f,g)\geq 1$.
  \item[(ii)] $\degD\gcrd(f,g)= \dim \lnullspace(S)$, where $\lnullspace(S)$ is the
    left nullspace of $S$.
  \item[(iii)] For any
    $w = (u_0,\ldots, u_{N-1}, v_0, \ldots, v_{M-1}) \in \RR(t)^{1 \times
      (M+N)}$
    such that $wS=0$, we have $uf+vg=0$, where $u=\sum_{0\leq i<N} u_i\D^i$ and
    $v=\sum_{0\leq i<M} v_i\D^i$.
  \item[(iv)] Suppose that $\degD\gcrd(f,g)\geq 1$.  Then there exists
    $w\in \RR[t]^{1\times (M+N)}$ such that $wS = 0$ and
    $\deg_t w \leq \mu = 2(M+N)d$.
  \end{enumerate}
\end{lemma}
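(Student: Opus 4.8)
The plan is to treat (i)--(iii) as essentially bookkeeping around Lemma 2.4 and the convolution-matrix identities already established, and to reserve the real work for the degree bound in (iv). For (i): by Lemma 2.4, $\degD\gcrd(f,g)\geq 1$ iff there exist $u,v\in\RR(t)[\D;']$ with $\degD u<N$, $\degD v<M$ and $uf+vg=0$. Writing $w=(u_0,\dots,u_{N-1},v_0,\dots,v_{M-1})$ and using the product formulas $(f_0,\dots,f_M)=(\fstar_0,\dots)\rconv{M-D}(h)$ displayed before the definition of $S$, the equation $uf+vg=0$ is exactly $wS=0$; a nontrivial such $w$ exists iff $S$ is singular (over the field $\RR(t)$). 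This simultaneously proves (i) and (iii), the latter just being the same computation read in reverse (given $wS=0$ decode $u,v$ and note the identity forces $uf+vg=0$). For (ii), the left nullspace of $S$ over $\RR(t)$ has dimension $(M+N)-\rank S$; I would argue that the map $w\mapsto(u,v)\mapsto(\fstar,\gstar)$ where $u=\fstar$-type cofactor sending $f,g$ through their GCRD sets up a bijection between $\lnullspace(S)$ and the space of pairs $(\fstar,\gstar)$ with $\fstar f = -\gstar g$ and appropriate degree bounds, and by Proposition 2.3(iii),(v) this space has dimension exactly $\degD\gcrd(f,g)$ (one free cofactor of degree $<\degD\lclm - \degD f$, etc.); this is the standard Sylvester-nullspace dimension count transported to the differential setting.

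The substantive part is (iv): producing a nullspace vector $w$ with a polynomial (in $t$) degree bound $\deg_t w\leq 2(M+N)d$. The idea is Cramer's rule / cofactor expansion. Since $S$ is singular with $\rank S = (M+N)-k$ where $k=\degD\gcrd(f,g)\geq 1$, pick a maximal nonsingular $(M+N-1)\times(M+N-1)$ submatrix (or, if the corank is larger, an appropriate maximal-rank submatrix) and build $w$ from the signed maximal minors of $S$ with one row deleted. Each entry of $S$ is a polynomial in $\RR[t]$ of degree at most $d$, so each such minor is a sum of products of at most $M+N$ such entries, giving $\deg_t w_i \leq (M+N)d$ for each coordinate. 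The factor of $2$ in the stated bound $\mu = 2(M+N)d$ then absorbs any slack: for instance, after forming $w$ by minors one may still need to clear a content or re-homogenize, or the corank-$>1$ case may require combining two minor vectors, and in the worst case one multiplies the "$u$-block minors" and "$v$-block minors" by complementary polynomials to make the single equation $uf+vg=0$ hold with a common degree profile, which can double the degree. I would state the clean bound $(M+N)d$ for the cofactor construction and then observe the factor $2$ is a safe over-estimate that also covers the step of clearing denominators introduced when the cofactors $u,v$ themselves are taken over $\RR(t)$ before scaling into $\RR[t]$.

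The main obstacle I anticipate is precisely the interaction between the corank (which equals $k=\degD\gcrd(f,g)$, possibly $>1$) and the degree bound: a single maximal minor gives a vector in the nullspace only when the corank is exactly $1$; for larger corank one must argue that some $(M+N-1)$-row subset still yields a nonzero cofactor vector, or instead extract $w$ from a nonsingular maximal minor together with a single extra column, which is where the degree could creep above $(M+N)d$. A secondary subtlety is ensuring the extracted $w$ genuinely has $u$ of degree $<N$ and $v$ of degree $<M$ in $\D$ (i.e.\ lands in the correct block structure) and is not identically zero --- this is guaranteed by singularity but needs the rank statement from part (ii). I do not expect either difficulty to be deep; the lemma is a direct differential analogue of classical facts about the Sylvester matrix (cf.\ \cite[Chapter~6]{MCA3}), and the only care needed is tracking $\deg_t$ through the cofactor expansion.
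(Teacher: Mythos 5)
Your proposal is correct and follows essentially the same route as the paper: parts (i)--(iii) are read off from Lemma~\ref{lem:nontrivgcrd} together with the linearization $wS=0$ discussed just before the lemma, and part (iv) is obtained by Cramer's rule plus a degree bound on determinants of the polynomial matrix $S$, whose entries have $\deg_t\le d$. Your extra worries (corank $>1$, the factor $2$ in $\mu$) are harmless, since choosing a maximal nonsingular submatrix handles the corank and $2(M+N)d$ is simply a generous over-estimate of the $(M+N)d$ that the minor expansion actually gives.
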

\begin{proof}
  Part (i) -- (iii) follow from Lemma \ref{lem:nontrivgcrd} and the discussion
  above.  Part (iv) follows from an application of Cramer's rule and a bound on
  the degree of the determinants of a polynomial matrix.
  \qed
\end{proof}


%
%
%

\subsection{Linear Algebra over $\RR$}
\label{ssec:inflated-diff-sylvester}
%

Let $S\in \RR[t]^{(M+N) \times (M+N)}$ be the differential Sylvester matrix of
$f, g\in\RR[t][\D;']$ of degrees $M$ and $N$ respectively in $\D$, and degrees
at most $d$ in $t$.  From Lemma \ref{lem:RRtGCRD} we know that if a GCRD of $f$
and $g$ exists, then there is a $w\in\RR[t]^{1\times (M+N)}$ such that $wS=0$,
with $\deg_t w\leq \mu = 2(M+N)d$.

\begin{definition}
  The $k^{th}$ convolution matrix of $b \in \RR[t]$ with $\deg b = m$ is defined
  as
  \[ C_k (b) = \begin{pmatrix}
    b_0	& 	  &\\
    b_1 	& \ddots  &\\
    \vdots 	& \ddots  & b_0\\
    b_m	&	  & b_1\\
    & \ddots  & \vdots \\
    & & b_m
  \end{pmatrix} \in \RR^{(m+k+1) \times (k+1)}.
  \]
  Let $a\in \RR[t]$ with $\deg a =\mu$ and define the mapping
  $\Gamma: \RR[t] \to \RR^{ (\mu+1) \times (\mu+d+1)}$ by
  $\Gamma(a) = C_{d}(a)^\tran $. $\Gamma(a)$ is the left multiplier matrix of
  $a$ with respect to the basis $\langle 1,t,\ldots,t^{\mu+d} \rangle$.
\end{definition}
A differential convolution matrix generalizes the convolution matrix in the role
of linearizing multiplication between differential polynomials.

\begin{definition}
  Given the $(M+N) \times (M+N)$ differential Sylvester matrix $S$, we apply
  $\Gamma$ entry-wise to $S$ to obtain
  $\Shat\in \RR^{(M+N)(\mu+1) \times (M+N)(\mu +d +1)}$; each entry of $S$ in
  $\RR[t]$ is mapped to a block entry in $\RR^{(\mu+1)\times (\mu+d+1)}$ of
  $\Shat$.  We refer to $\Shat$ as the \emph{inflated differential Sylvester
    matrix} of $f$ and $g$.
\end{definition}


%

\begin{lemma}
  \label{lem:RRGCRD}
  Let $f,g\in\RR[t][\D;']$ have differential Sylvester matrix
  $S\in\RR[t]^{(M+N)\times (M+N)}$ and inflated differential Sylvester matrix
  \[
  \Shat\in\RR^{(M+N)(\mu+1)\times (M+N)(\mu+d+1)}.
  \]
  There exists a
  $w\in \RR[t]^{1\times (M+N)}$ such that $wS=0$, if and only if there exists a
  $\what \in \RR^{(\mu+d+1)\times (M+N)(\mu+1)}$ such that $\what \Shat =0$.
  More generally,
  \[
  \degD\gcrd(f,g)= \frac{\dim\lnullspace(\Shat)}{\mu+d+1}.
  \]
\end{lemma}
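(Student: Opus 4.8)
The plan is to reduce the statement about $wS=0$ over $\RR[t]$ to a statement about $\what\Shat=0$ over $\RR$ by carefully tracking how the map $\Gamma$ linearizes polynomial multiplication, and then to count dimensions. First I would fix the linearization conventions: a row vector $w=(w_1,\ldots,w_{M+N})\in\RR[t]^{1\times(M+N)}$ with $\deg_t w_j\le\mu$ corresponds to a block row vector $\what=(\pvec{w_1},\ldots,\pvec{w_{M+N}})\in\RR^{1\times(M+N)(\mu+1)}$, padding each $\pvec{w_j}$ to exactly $\mu+1$ coordinates. The key algebraic fact I would establish is that for $a\in\RR[t]$ with $\deg_t a\le\mu$ and $b\in\RR[t]$ with $\deg_t b\le d$, the product $ab\in\RR[t]$ (of degree at most $\mu+d$) satisfies $\pvec{ab}=\pvec{a}\,\Gamma(b)$, where $\Gamma(b)=C_d(b)^{\tran}\in\RR^{(\mu+1)\times(\mu+d+1)}$; this is just the statement that $C_d(b)$ is the convolution (Toeplitz) matrix realizing multiplication by $b$ on polynomials of degree $\le d$, written on the appropriate side. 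Summing over the columns of $S$, the $\ell$-th entry of the row vector $wS\in\RR[t]^{1\times(M+N)}$ is $\sum_j w_j S_{j\ell}$, and linearizing this entry (as a polynomial of degree $\le\mu+d$) gives exactly the $\ell$-th length-$(\mu+d+1)$ block of $\what\Shat$, by block-matrix multiplication with $\Shat$ assembled blockwise from the $\Gamma(S_{j\ell})$. Hence $wS=0$ in $\RR[t]^{1\times(M+N)}$ if and only if $\what\Shat=0$ in $\RR^{1\times(M+N)(\mu+d+1)}$.

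Next I would handle the two directions of the "if and only if" cleanly. The ($\Rightarrow$) direction is immediate: if $w\ne 0$ then $\what\ne 0$ since linearization is injective, and the computation above gives $\what\Shat=0$. For ($\Leftarrow$), given any nonzero $\what\in\RR^{1\times(M+N)(\mu+1)}$ with $\what\Shat=0$, I would chop $\what$ into $M+N$ blocks of length $\mu+1$, interpret each block as the coefficient vector of a polynomial of degree $\le\mu$, assemble these into $w\in\RR[t]^{1\times(M+N)}$, and invoke the same identity to conclude $wS=0$; nonzero $\what$ forces nonzero $w$. This uses Lemma~\ref{lem:RRtGCRD}(iv) only insofar as it guarantees that the bound $\mu$ is large enough that no nontrivial left-nullspace element is lost by restricting to $\deg_t\le\mu$ — that is the role of the degree bound $\mu=2(M+N)d$, and I would cite part~(iv) for exactly this purpose.

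For the dimension formula, the idea is that the left nullspace of $\Shat$ over $\RR$ is, via the blockwise linearization, in bijection with the $\RR[t]$-module of row vectors $w$ with $\deg_t w\le\mu$ and $wS=0$; and by Lemma~\ref{lem:RRtGCRD}(ii) together with the degree bound, this module is a free $\RR[t]/(t^{\mu+1})$-style space — more precisely, the set of such $w$ is an $\RR$-vector space of dimension $(\mu+1)\cdot\dim\lnullspace(S)$, because the left nullspace of $S$ over $\RR(t)$ has dimension $\degD\gcrd(f,g)$ as a vector space over $\RR(t)$, and one may choose a basis of $\RR[t]$-polynomial nullspace vectors whose $\RR[t]$-span intersected with $\{\deg_t\le\mu\}$ has $\RR$-dimension $(\mu+1)$ times the $\RR(t)$-rank. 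I would then observe that $\dim_{\RR}\lnullspace(\Shat)=(\mu+d+1)\cdot\dim_{\RR(t)}\lnullspace(S)$ — wait, I need to be careful here: the natural count gives $\dim_{\RR}\lnullspace(\Shat)$ equal to the $\RR$-dimension of the space of valid $\what$, which matches the $\RR$-dimension of $\{w:\deg_t w\le\mu,\ wS=0\}$. The subtlety is reconciling the factor $\mu+1$ (length of a $w$-block) on the nullspace side with the factor $\mu+d+1$ (length of an $\Shat$-column block) appearing in the claimed formula; the resolution is that $\Shat$ has full column-block rank structure so that $\dim\lnullspace(\Shat)=(\mu+1)(M+N)-\rank(\Shat)$, and $\rank(\Shat)=(\mu+d+1)\cdot\big((M+N)-\dim_{\RR(t)}\lnullspace(S)\big)/\text{(appropriate factor)}$ after accounting for the injectivity of $\Gamma$ on nonzero polynomials. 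The main obstacle will be precisely this bookkeeping — getting the two differently-sized blockings ($\mu+1$ for row/nullspace vectors, $\mu+d+1$ for the column space) to produce the stated quotient $(\mu+d+1)$ in the denominator rather than $(\mu+1)$. I expect the clean way through is to argue that the columns of $\Shat$ naturally split into $M+N$ groups of $\mu+d+1$ columns each, that the rank contributed by each group is controlled by the corresponding structure of $S$, and that a basis of $\lnullspace(S)$ over $\RR(t)$ lifts to $\mu+d+1$ independent elements of $\lnullspace(\Shat)$ per basis vector after clearing denominators and accounting for shifts by powers of $t$; hence $\dim\lnullspace(\Shat)=(\mu+d+1)\dim_{\RR(t)}\lnullspace(S)=(\mu+d+1)\degD\gcrd(f,g)$, which rearranges to the claimed identity.
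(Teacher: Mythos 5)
Your handling of the equivalence is correct and is essentially the argument the paper's one-line proof intends: the blocks of $\Shat$ are the matrices of multiplication by the entries of $S$, mapping coefficient rows of length $\mu+1$ to coefficient rows of length $\mu+d+1$, so $wS=0$ with $\deg_t w\le\mu$ corresponds exactly to $\what\Shat=0$, and your appeal to Lemma~\ref{lem:RRtGCRD}(iv) (via part~(i)) is precisely what guarantees that restricting to $\deg_t w\le\mu$ loses no nontrivial kernel vector. (A small notational point: the multiplication matrix with the dimensions you state is $C_\mu(b)^{\transpose}$ for $\deg_t b\le d$; the paper's ``$C_d(a)^{\transpose}$ with $\deg a=\mu$'' is garbled, and you inherited that, but your dimensions and intent are right.)

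The dimension formula is where there is a genuine gap, and your own text signals it: you first claim the space of $w$ with $\deg_t w\le\mu$ and $wS=0$ has $\RR$-dimension $(\mu+1)\dim_{\RR(t)}\lnullspace(S)$, and then close by asserting that each $\RR(t)$-basis vector of $\lnullspace(S)$ lifts to $\mu+d+1$ independent elements of $\lnullspace(\Shat)$; these two counts contradict each other, and neither is justified. The correct count is obtained from a minimal (row-reduced) polynomial basis $w^{(1)},\dots,w^{(r)}$ of the left kernel module of $S$ with row degrees $\delta_1,\dots,\delta_r$: the degree-$\le\mu$ kernel vectors are exactly the combinations $\sum_i c_i(t)w^{(i)}$ with $\deg c_i\le\mu-\delta_i$, so $\dim\lnullspace(\Shat)=\sum_i(\mu+1-\delta_i)$, which depends on the $\delta_i$ and can never equal $(\mu+d+1)r$ (that would force $\sum_i\delta_i=-rd<0$). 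A concrete check: $f=\D$, $g=t\D$ gives $M=N=1$, $d=1$, $\mu=4$, $S=\left(\begin{smallmatrix}0&1\\0&t\end{smallmatrix}\right)$, left kernel generated by $(t,-1)$, hence $\dim\lnullspace(\Shat)=4$, while $(\mu+d+1)\degD\gcrd(f,g)=6$. So the final lifting claim is false and no bookkeeping of the kind you sketch can establish the displayed quotient identity as stated; what you can prove is the equivalence plus the count $\sum_i(\mu+1-\delta_i)$, and the degree of the GCRD is then recovered from the rank of $\Shat$ by dividing by the column block size and rounding (as Algorithm~\ref{alg:RankAlg} in fact does), not by the exact quotient of the left nullity. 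You should flag this discrepancy explicitly rather than paper over it with the unsupported lift.
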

\begin{proof}
  This follows directly from the definition of $\Gamma$ and Lemma
  \ref{lem:RRtGCRD}.
  \qed
\end{proof}

We note that $\Shat$ is no longer a square matrix. This will not pose too many
problems as we will see in the following sections.

\label{sec:approxgcrd}
%
%
%
%

\subsection{Division Without Remainder }
\label{ssec:diff-division}
While multiplication of differential polynomials with approximate numerical
coefficients is straightforward, division is somewhat more difficult.  We will
generally require a division \emph{without remainder}, for the computation of
which we use a least squares approach.  Given $f,h\in\RR[t][\D;']$ as in
\eqref{eq:fhnum}, we wish to find an $\fstar\in\RR(t)[\D;']$ such that
$\norm{f-f^*h}$ is minimized.  We will assume as usual that $\degD f=M$,
$\degD h=D$ and $\deg_t f, \deg_t h\leq d$.


Much as in the (approximate polynomial) commutative case, we do this by setting
the problem up as a linear system and then finding a least squares solution.
Let us assume for now that $f= \fstar h$ is exact, so this can be expressed as a
linear system over $\RR(t)$ by writing
\begin{equation}
\label{eqn:co-factor-system}
(f_0,f_1,\ldots, f_{M}) =
(\fstar_0,\ldots, \fstar_{M-D}) \rconv{M-D} (h).
\end{equation}

This system of equations is over-constrained (over $\RR(t)$), but we note that
the sub-matrix formed from the last $M-D+1$ columns of $\rconv{M-D}(h)$ is lower
triangular, with diagonal entry $h_D\in\RR[t]$.  Thus, any exact quotient
$h\in\RR[t][\D;']$ such that $f=\fstar h$, in lowest terms, must have
denominators dividing $h_D^{M-D+1}$, and in particular have denominators of
degree at most $ (M-D+1) \deg_t h_D  \leq (M-D+1)d$.  Equivalently,
$h_D^{M-D+1} \fstar \in\RR[t]^{M-D+1}$.  By applying Cramer's rule on the last $M-D+1$ columns of $\rconv{M-D}(h)$, the degrees of the
numerators in $\fstar$ must be at most $(M-D+1)d$.  Using this information we
can formulate an associated problem with coefficients from $\RR[t]$ and avoid
performing linear algebra over $\RR(t)$.

Now let $v_{-1},v_0,\ldots,v_{M-D}$ be generic polynomials in $t$, with
indeterminate coefficients of degree at most $(M-D+1)d$.   I.e.,
\[
v_i = \sum_{j=0}^{(M-D+1)d} v_{ij}t^j, ~~~\mbox{$i=-1\ldots {M-D}$,}
\]
for indeterminates $v_{ij}$ with $v_{-1}\neq 0$.  Then we are seeking to solve the linear system of equations
\[
v_{-1} \cdot (f_0,\ldots,f_M) = (v_0,\ldots,v_{M-D}) \ \rconv{M-D}(h)
\]
for the $v_{ij}$.  For each entry $f_i$ we have $(M-D+1)d+d+1$ equations; this
is the degree $(v_0,\ldots,v_{M-D}) \rconv{M-D}(h)$ plus one, and we get one
equation per coefficient.  Hence there are $(M+1)((M-D+1)d+d+1)$ equations in
$(M-D+2)(M-D+1)d$ unknowns.  We then use a standard linear least squares
solution to find the $v_{i}$ which minimizes the residual, and thus minimizes
$\norm{f-\fstar h}$.

It may be desirable to find the lowest degree $v_{-1}$ which meets this
criteria, for which we can use a simple binary search for a lower degree with
reasonable residual (or alternatively use an SVD-based identification procedure).

Finally, a more straightforward approach to solving \eqref{eqn:co-factor-system}
is to simply use the solution from the last $M-D+1$ columns of
$\rconv{M-D}(h)$. The last $M-D+1$ columns of $\rconv{M-D}(h)$ are lower
triangular, with diagonal entries consisting of $h_D\in\RR[t]$.  While this does
not yield a solution to the least squares normal equations, it is usually
sufficiently good in practice, and considerably easier to formulate.

\section{Optimization-based Formulation of Approximate GCRD}
\label{sec:opt-gcrd}
First we standardize some notation and assumptions.  We assume that
$f,g,\ftil,\gtil,h \in \RR[t][\D;']$ and $\fstar,\gstar\in\RR(t)[\D;']$.
Moreover, we assume that $\ftil=\fstar h$ and $\gtil = \gstar h$ and
$h=\gcrd(\ftil,\gtil)$.  Intuitively, $f,g$ are our ``input polynomials'' and we
will be identifying ``nearby'' $\ftil,\gtil$ with a non-trivial GCRD $h$.  Note
that $\fstar,\gstar$ have rational function coefficients.
Later we will find it useful to clear fractions and work with a primitive
associate.

We also assume degree bounds as follows: $\degD f =\degD\ftil = M$, \linebreak
$\deg_t f, \deg_t\ftil \leq d$, $\degD g, \degD \gtil = N$,
$\deg_t g,\deg_t\gtil \leq d$, $\degD h = D$, $\degD \fstar = M-D$ and
$\degD \gstar = N-D$.

Using the method of \cite{GiesbrechtHaraldson14}, essentially the generalization
of the SVD-based method of \cite{CGTW95} to differential polynomials, we will
make an initial guess for $\ftil,\gtil$;  details are described in Section
\ref{sec:algs-for-gcrd} of this paper.  We then use optimization techniques to
hone in on polynomials with minimal distance.  While the techniques in that
paper are not particularly effective at providing a nearest solution, they do
provide a suitable initial guess, which we employ here.

We next describe how to formulate an objective function $\Phi$ that, when
minimized, corresponds to a solution to the approximate GCRD problem.
Define the objective function $\Phi:\RR[t][\D;']\times \RR(t)[\D;']^2 \to \RR$
as
\[
\Phi(h,\fstar, \gstar) = \norm{f-\fstar h}_2^2 + \norm{g-\gstar h}_2^2.
\]
In keeping up with our notation from earlier, we observe that $\ftil = \fstar h$
and $\gtil = \gstar h$ in the context of the objective function $\Phi$, as $f$
and $g$ will typically be relatively prime.  To compute guesses for the
co-factors given $h$, we will perform an approximate division without remainder
using the method of Section \ref{ssec:diff-division}.  We only require an
initial guess for $\fstar$ and $\gstar$ to minimize $\Phi$, so this
factorization doesn't need to be exact, in the event that $\gcrd(f,g) = h$.

We show that $\Phi$ has an attainable global minimum under appropriate
assumptions. More precisely, there exist non trivial $\ftil$ and $\gtil$ such
that
\begin{equation}
  \label{eq:ftilgtil}
  \norm{f-\ftil}_2^2 + \norm{g-\gtil}_2^2
\end{equation}
is minimized. Furthermore, we will show that the approximate GCRD problem is
locally well-posed.
%
%
%

\subsection{Existence of Solutions}

\begin{lemma}
  \label{lem:co-factor-bound}
  Let $f, h \in \RR[t][\D;']$, with monic leading coefficients, be not
  necessarily primitive, such that $f=\fstar h$ for $\fstar \in \RR[t][\D;']$
  with $\degD f = M$ and $\degD h =D$.  Then $\norm{\fstar}$ is bounded above.
\end{lemma}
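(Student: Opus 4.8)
The plan is to recover the coefficients $\fstar_{M-D},\fstar_{M-D-1},\ldots,\fstar_0$ of $\fstar$ one at a time by running the right‑division algorithm of $f$ by $h$ (equivalently, by exploiting the lower‑triangular structure of the last $M-D+1$ columns of $\rconv{M-D}(h)$ together with the monicity of $h_D$), and to control the size of the piece produced at each step. I would argue by induction on $\degD\fstar=M-D$, establishing the slightly more convenient statement: if $h$ has monic leading coefficient as a polynomial in $t$, if $h$ divides $f$ on the right within $\RR[t][\D;']$, and if all degrees are bounded by the ambient parameters, then $\norm{\fstar}$ is bounded above in terms of $\norm{f}$, $\norm{h}$, and those parameters. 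The hypothesis $\fstar\in\RR[t][\D;']$ is exactly what makes every intermediate leading‑coefficient division exact over $\RR[t]$: from $f_M=\fstar_{M-D}h_D$ with all three factors in $\RR[t]$, $h_D$ divides $f_M$ exactly.

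For the inductive step I would compare coefficients of $\D^{M}$ on the two sides of $f=\fstar h$. Since multiplication in $\RR[t][\D;']$ does not lower the $\D$‑order of a leading term, this gives $f_M=\fstar_{M-D}\,h_D$ in $\RR[t]$, hence $\fstar_{M-D}=f_M/h_D$ is an exact polynomial quotient with $\deg_t\fstar_{M-D}=\deg_t f_M-\deg_t h_D$, and $\norm{\fstar_{M-D}}$ is bounded in terms of $\norm{f_M}\le\norm{f}$ and $\deg_t h_D$ (discussed below). I would then set $f^{(1)}=f-\fstar_{M-D}\D^{M-D}h=\bigl(\fstar-\fstar_{M-D}\D^{M-D}\bigr)h$ and note that its cofactor $\fstar^{(1)}=\fstar-\fstar_{M-D}\D^{M-D}$ lies in $\RR[t][\D;']$ with $\degD\fstar^{(1)}\le M-D-1$, that $\degD f^{(1)}\le M-1$, and that the $t$‑degrees occurring stay within a bound of the form $(M-D+1)d$, as in the degree analysis of Section~\ref{ssec:diff-division}. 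Because multiplication of differential polynomials of bounded degree is a bounded bilinear map — the action of $\D$ on a coefficient is differentiation, which scales its norm by at most a degree‑dependent factor and does not raise its $t$‑degree — one gets $\norm{\fstar_{M-D}\D^{M-D}h}\le c_1\norm{\fstar_{M-D}}\norm{h}$ with $c_1$ depending only on the degrees, so $\norm{f^{(1)}}\le\norm{f}+c_1\norm{\fstar_{M-D}}\norm{h}$ is bounded. Applying the induction hypothesis to $f^{(1)}=\fstar^{(1)}h$ bounds $\norm{\fstar^{(1)}}$, whence $\norm{\fstar}\le\norm{\fstar_{M-D}}+\norm{\fstar^{(1)}}$ is bounded; the base case $M=D$ is the single exact division $\fstar=\fstar_0=f_D/h_D$, handled the same way.

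The only step needing more than degree bookkeeping is the bound on $\norm{\fstar_{M-D}}=\norm{f_M/h_D}$, and this is where monicity of the leading coefficient of $h$ is indispensable: without it one divides by a polynomial whose leading coefficient in $t$ may be arbitrarily close to $0$ and $\fstar$ need not be bounded. I would discharge this either quantitatively — by a Mignotte/Mahler‑type bound on the coefficients of a divisor of a polynomial, which for monic $h_D$ bounds $\norm{f_M/h_D}$ in terms of $\deg_t f_M$ and $\norm{f_M}$ — or qualitatively, by observing that $h_D\mapsto C_{k}(h_D)^{+}$ is continuous on the compact set of monic polynomials of bounded degree and norm, so its operator norm is bounded there and $\norm{\fstar_{M-D}}=\norm{C_k(h_D)^{+}f_M}$ is controlled. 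A fully non‑constructive alternative, which I would mention but not pursue, is that $\fstar$, being the unique right quotient, is a continuous function of $(f,h)$ on the algebraic set cut out by $f=\fstar h$ and the prescribed degree pattern, so for fixed $f$ its image over the compact set of admissible right divisors $h$ is compact, hence bounded; a companion remark is that $\norm{h}$ itself can be bounded in terms of $\norm{f}$ by a Mignotte‑type estimate on monic right divisors, which removes the dependence on $\norm{h}$ but is not needed here.
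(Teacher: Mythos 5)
Your proof is correct and takes essentially the same route as the paper: the paper's one-line proof simply takes the Cramer (equivalently, back-substitution) solution of \eqref{eqn:co-factor-system} restricted to the last $M-D+1$ columns of $\rconv{M-D}(h)$, whose lower-triangular structure with monic diagonal $h_D$ is exactly what your induction exploits. The extra detail you supply --- the Mahler/Mignotte or compactness bound controlling exact division by the monic $h_D$ --- fills in a step the paper leaves implicit rather than constituting a different approach.
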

\begin{proof}
  It follows that $\fstar$ is bounded by the computing the Cramer solution
  to \eqref{eqn:co-factor-system} using the last $M-D+1$ columns of
  $\rconv{M-D}(h)$.
\end{proof}

As an observation, we relax the assumption that $f$ is primitive (we work with
an associate instead) in order to guarantee that $\fstar \in \RR[t][\D;']$. This
can be taken without loss of generality as the quantity $\norm{\content(f)}_2^2$
is bounded above and away from zero (as its leading coefficient is monic). Thus
we may divide by it without affecting the quality of the results, as
$\norm{f-\fstar h}$ is still well defined.

We will make use of the following well known fact from
\citep[Theorem~4.16]{rudin76}.
\begin{fact}
  \label{thm:weirstrass-extreme-value}
  Suppose that $\Phi$ is a continuous real function on a compact metric space
  $X$. Then there exist points $p$ and $q$ in $X$ such that
  \[
  \Phi(p) \leq \Phi(x) \leq \Phi (q),
  \]
  for all $x\in X$. Precisely, $\Phi$ attains its minimum and maximum values at
  $p$ and $q$ respectively.
\end{fact}
%
%
%

We first state a general version of the theorem where a logical predicate
$\Xi:\RR^k\to \{\text{true},\text{false}\}$ (for some $k$) can be chosen to
impose additional constraints on the problem.  For the rest of this section let
\[
\phi: \RR[t][\D;']^2 \to \RR^{(M+N+2)(d+1)}
\]
be the combined coefficient vector function, i.e. for arbitrary
$f,g \in \RR[t][\D;']$ we write $\phi(f,g) = (\pvec f, \pvec g)$, where
$\pvec f$ and $\pvec g$ are padded with zeros to have the desired dimensions.

The following lemma and its proof are analogous to \cite [Theorem~2]{KYZ07b},
which in turn generalizes the univariate argument of \cite [Theorem~1]{KYZ07}.

\begin{theorem}[Existence of Global Minima]
  \label{thm:global-min} Let $f,g \in \RR[t][\D;']\backslash \{0 \}$, let
  $d=\max\{ \deg_t f, \deg_t g \}$, $\degD f=M$, $\degD g=N$ and
  $D\leq \min \{M,N \}$.  Furthermore, let
  $\Xi: \RR^{(M+N+2)(d+1)} \to \{\true, \false \}$ be a predicate on
  $\phi(f,g)$. We assume that the preimage $\Xi^{-1}(\true)$ is a topologically
  closed set in $\RR^{(M+N+2)(d+1)}$ with respect to the Euclidean norm.  For a
  given $\Omega\in \RR_{>0}$ we define the set of possible solutions by
  \[
  \calF_{\Omega} = \left \{
    \begin{array}{cc} (\ftil,\gtil) \in \RR[t][\D;']^2 \text{ such that }
      &  \degD \ftil = M, \\ &  \degD \gtil = N, \\
      & \degD \htil \geq D, \\
      &  \htil = \gcrd(\ftil,\gtil) ,\\ & \norm{\htil} \leq \Omega, \\
      &  \lcoeff_t (\lcoeff_\D  \htil) =1, \\
      &  \text{ and }\Xi( \phi(\ftil,\gtil)) = \true
    \end{array}
  \right \}.
  \]
  Suppose that $\calF_{\Omega} \neq \emptyset$. Then the minimization problem
  \begin{equation}
    \min_{(\ftil,\gtil) \in \calF_\Omega} \norm{f-\ftil}_2^2 + \norm{g-\gtil}_2^2 \label{eq:opt-function}
  \end{equation}
  has an attainable global minimum.
\end{theorem}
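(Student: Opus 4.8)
The plan is to apply the Weierstrass extreme value theorem (the Fact quoted above), so the work is entirely in exhibiting a compact set on which the objective is continuous and whose infimum equals the infimum over $\calF_\Omega$. First I would observe that the objective $\Psi(\ftil,\gtil) = \norm{f-\ftil}_2^2 + \norm{g-\gtil}_2^2$ is a continuous function of the coefficient vector $\phi(\ftil,\gtil) \in \RR^{(M+N+2)(d+1)}$, so it suffices to show that the image $\phi(\calF_\Omega)$, intersected with a ball of radius $R$ around $\phi(f,g)$ for a suitable $R$, is a closed and bounded subset of Euclidean space. Since $\calF_\Omega \neq \emptyset$, fix any $(\ftil_0,\gtil_0) \in \calF_\Omega$ and set $R = \sqrt{\Psi(\ftil_0,\gtil_0)}$; the minimum over $\calF_\Omega$ is unchanged if we restrict to the set $\calF_\Omega^R$ of pairs with $\norm{f-\ftil}_2^2 + \norm{g-\gtil}_2^2 \leq R^2$, which forces $\norm{\ftil}_2 \leq \norm{f}_2 + R$ and $\norm{\gtil}_2 \leq \norm{g}_2 + R$. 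Hence $\phi(\calF_\Omega^R)$ is bounded.

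Boundedness is the easy half; the real content is closedness of $\phi(\calF_\Omega^R)$. I would take a sequence $(\ftil^{(k)},\gtil^{(k)}) \in \calF_\Omega^R$ whose coefficient vectors converge to some limit, and must show the limit is again the coefficient vector of a pair in $\calF_\Omega^R$. The degree-in-$\D$ constraints $\degD\ftil = M$, $\degD\gtil = N$ could a priori fail in the limit (leading $\D$-coefficients could vanish), but this is exactly where the normalization $\lcoeff_t(\lcoeff_\D \htil) = 1$ and the bound $\norm{\htil}\leq\Omega$ earn their keep: writing $\ftil^{(k)} = \fstar{}^{(k)} h^{(k)}$, $\gtil^{(k)} = \gstar{}^{(k)} h^{(k)}$, the set of admissible $h$ (degree exactly $D' \geq D$ in $\D$, leading-leading coefficient $1$, norm $\leq\Omega$) is itself compact, so we may pass to a subsequence on which $h^{(k)} \to h^{(\infty)}$ with $h^{(\infty)}$ still satisfying those constraints (in particular $h^{(\infty)} \neq 0$ and $\degD h^{(\infty)} = D'$). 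Then Lemma \ref{lem:co-factor-bound} shows the co-factors $\fstar{}^{(k)}, \gstar{}^{(k)}$ are bounded (after clearing denominators to a primitive associate as the paper arranges), so after a further subsequence they converge to $\fstar{}^{(\infty)}, \gstar{}^{(\infty)}$, and by continuity of multiplication of differential polynomials the limits satisfy $\ftil^{(\infty)} = \fstar{}^{(\infty)} h^{(\infty)}$ and $\gtil^{(\infty)} = \gstar{}^{(\infty)} h^{(\infty)}$. Continuity of multiplication also preserves the degree identities $\degD\ftil = M$, $\degD\gtil = N$ as long as the co-factor leading coefficients do not collapse, which again follows from tracking them through the compactness argument.

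It remains to check that $h^{(\infty)} = \gcrd(\ftil^{(\infty)},\gtil^{(\infty)})$ and not merely a common right divisor, and that the predicate constraint survives. For the GCRD: $h^{(\infty)}$ divides both limits on the right by construction, so $\degD\gcrd(\ftil^{(\infty)},\gtil^{(\infty)}) \geq \degD h^{(\infty)} \geq D$; since the statement only requires $\degD\htil \geq D$ with $\htil = \gcrd(\ftil^{(\infty)},\gtil^{(\infty)})$, I would argue that the actual GCRD has degree $\geq D$ and then, if it strictly exceeds $\degD h^{(\infty)}$, simply take $\htil$ to be the true GCRD (still of norm bounded, after rescaling, by something depending only on $\ftil^{(\infty)},\gtil^{(\infty)}$, hence on $\Omega$ and $R$ — here one may need $\Omega$ chosen generously, or one reads the constraint $\norm{\htil}\leq\Omega$ as applying to a normalized GCRD whose norm is controlled by the co-factor bound). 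For the predicate: $\Xi^{-1}(\true)$ is closed by hypothesis and $\phi$ is continuous, so $\Xi(\phi(\ftil^{(\infty)},\gtil^{(\infty)})) = \true$. This establishes $(\ftil^{(\infty)},\gtil^{(\infty)}) \in \calF_\Omega^R$, so $\phi(\calF_\Omega^R)$ is compact, the objective is continuous on it, and the Fact yields an attained minimum.

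The main obstacle I anticipate is the bookkeeping around degrees and normalizations in the limit: ensuring the limiting $h^{(\infty)}$ keeps degree exactly $D' = \degD h^{(k)}$ (not lower), that the limiting co-factors keep the degrees forcing $\degD\ftil^{(\infty)} = M$ and $\degD\gtil^{(\infty)} = N$, and that the norm bound $\Omega$ on the GCRD is compatible with whatever rescaling the leading-coefficient normalization imposes. All of these are handled by choosing subsequences carefully and invoking Lemma \ref{lem:co-factor-bound} together with the compactness of the normalized-$h$ set, but getting the quantifiers in the right order is the delicate part; the continuity and closedness facts themselves are routine.
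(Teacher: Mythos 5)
Your overall strategy (restrict to a sublevel set and invoke Fact~\ref{thm:weirstrass-extreme-value}) matches the paper's, but you try to carry it out directly in the $(\ftil,\gtil)$ coefficient space by proving that $\phi(\calF_\Omega)$ intersected with the sublevel ball is closed, and that is exactly where the argument has a genuine hole. Two of the defining conditions of $\calF_\Omega$ are not preserved under coefficient-wise limits, and your patches for them do not work. First, the constraint $\norm{\htil}\leq\Omega$ is imposed on the \emph{true} GCRD of the pair, normalized by $\lcoeff_t\lcoeff_\D\htil=1$. In the limit you only obtain a common right divisor $h^{(\infty)}$ of norm at most $\Omega$; the actual GCRD of the limit pair can have strictly larger degree, and after the forced normalization its norm can exceed $\Omega$. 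Your two suggested remedies --- ``choose $\Omega$ generously'' or read the constraint as applying to a divisor controlled by the co-factor bound --- respectively change the data of the theorem ($\Omega$ is given) and change the statement being proved. Second, $\degD\ftil=M$ and $\degD\gtil=N$ are exact-degree, hence non-closed, conditions: nothing in $\calF_\Omega$ or in the sublevel restriction prevents $\lcoeff_\D{\fstar}$ along the sequence from tending to zero, in which case the limit has $\degD\ftil^{(\infty)}<M$ and fails membership; your assertion that the co-factor leading coefficients ``do not collapse'' is not justified by the normalization of $h$ or by Lemma~\ref{lem:co-factor-bound}, which bounds co-factors above but says nothing about degree drop. (A smaller, fixable point: compactness of your ``admissible $h$'' set also needs a bound on $\deg_t h$, imported from a GCRD degree bound such as Lemma~\ref{lem:RRtGCRD}(iv); $\norm{h}\leq\Omega$ alone does not confine $h$ to a fixed finite-dimensional space, and you must also pass to a subsequence on which $\degD h^{(k)}$ and its $t$-degree structure are constant.)

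For contrast, the paper never attempts to show that the feasible image in $(\ftil,\gtil)$-space is closed. It fixes the degree structure of a candidate divisor $h$ and iterates over the admissible structures, parametrizes candidates by the triple $(h,\fstar,\gstar)$ with $\lcoeff_t\lcoeff_\D h=1$, and works with the continuous objective $\Phi(h,\fstar,\gstar)=\norm{f-\fstar h}_2^2+\norm{g-\gstar h}_2^2$ on that parameter space. The sublevel-set trick, the observation that $h$ is a right factor of $\Gtil=\gcrd(\fstar h,\gstar h)$ with $\norm{\Gtil}\leq\Omega$, and Lemma~\ref{lem:co-factor-bound} for the co-factors (after passing to suitable associates) confine the parameters to a closed ball $B$; the minimization is then performed over $B$ intersected with the preimage of the closed set $\Xi^{-1}(\true)$ under the continuous map $\zeta:(h,\fstar,\gstar)\mapsto\phi(\fstar h,\gstar h)$, which is compact, and the Weierstrass fact applies there. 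To salvage your route you would have to genuinely establish closedness of $\phi(\calF_\Omega)$ intersected with the sublevel set --- addressing both failure modes above --- or else switch, as the paper does, to the factorization parametrization where the problematic constraints are encoded in the parameter domain rather than checked on limits of $(\ftil,\gtil)$.
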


\begin{proof}
  Without loss of generality, we assume that $M\leq N$.  Then we iterate the
  minimization over all $\ell \in \ZZ_{\geq 0}$ such that $D\leq \ell \leq M$
  and coefficients  $\rho \subset \RR[t]^{\ell}$. Let
$\calH_{\ell,\rho}$ denote the
  set of all differential polynomials over $\RR[t][\D;']$ of degree $\ell$ with
  coefficients from $\rho$.
  We optimize over the continuous real objective function
  \[
  \Phi(h,\fstar, \gstar) = \norm{f-\fstar h}_2^2 + \norm{g-\gstar h }_2^2,
  \]
  for $h\in \calH_{\ell,\rho}$, $\degD \fstar \leq M-D$ and
  $\degD \gstar \leq N-D$.  We fix the leading coefficient of $h$ with respect
  to $\D$ to be monic, that is $\lcoeff_t \lcoeff_\D h =1$.

  Since the leading coefficient of $h$ is monic, we can write
  $G= \gcrd(\fstar h, \gstar h)$ with $\degD G \geq D$.  Since $G$ is a multiple
  of $h$, we normalize $G$ so that $\lcoeff_t \lcoeff_\D G = 1$, i.e. the
  leading coefficient of $G$ is also monic. The restriction on $h$ that the
  leading coefficient of $h$ is monic enforces that $\degD G \geq D$.
  Furthermore, we restrict the domain of our function $\Phi$ to those $h,\fstar$
  and $\gstar$ for which $(\fstar h, \gstar h) \in \calF_\Omega$.
  If there is no such common factor $h$ and co-factors $\fstar$ and $\gstar$,
  then this pair of $\ell$ and $\rho$ does not occur in the minimization
  \eqref{eq:opt-function}. By assumption we have that
  $\calF_\Omega \neq \emptyset$, so there must be at least one possible case.
  We note that if $(0,0) \in \calF_\Omega$, then $\fstar = \gstar =0$.

  Now suppose that for the given $\ell$ and $\rho$, there are
  $\htil \in \calH_{\ell,\rho}$ and $\ftil^*, \gtil^*$ satisfying
  $\degD \ftil^* \leq M-\ell$ and $\degD \gtil^* \leq N-\ell$ such that
  $(\ftil^* \htil, \gtil^* \htil) \in \calF_\Omega$. We shall prove that the
  function $\Phi$ has a value on a closed and bounded set (i.e., compact with
  respect to the Euclidean metric) that is smaller than elsewhere.  Hence $\Phi$
  attains a global minimum by Fact~\ref{thm:weirstrass-extreme-value}.


  Clearly any solution $\htil \in \calH_{\ell,\rho}$ and $\ftil^*, \gtil^*$ with
  $(\ftil^* h, \gtil^* h) \in \calF_\Omega$ but with
  $\Phi(\htil, \ftil^*, \gtil^*) >
  \Phi(h,\fstar,\gstar)$ 
  can be discarded.  So the norm of the products $\norm{\ftil^*
    \htil}_2$ and $\norm{\gtil^*
    \htil}_2$ can be bounded from above.  We have that
  $\norm{\htil}$
  is bounded above by Lemma \ref{lem:co-factor-bound} because it is a right
  factor of $\Gtil
  = \gcrd(\ftil^* \htil, \gtil^* \htil)$ with $\norm{\Gtil} \leq
  \Omega$. We note that
  $\htil$
  has a monic leading coefficient, so $\norm{\htil}
  \geq 1$.  We have that $\norm{\ftil^*}$ and
  $\norm{\gtil^*}$
  (or the appropriate associate) are both bounded above by Lemma
  \ref{lem:co-factor-bound}.

  Thus we can restrict the domain of $\Phi$
  to values that lie within a sufficiently large closed ball $B$.
  The function $\zeta
  $ that maps $(h,\fstar,
  \gstar)$ to the combined coefficient vector $\phi(\fstar h,\gstar
  h)$ of $\fstar h$ and $\gstar h$ is continuous. We minimize over $\zeta
  ^{-1}(\Xi^{-1}(\true)\cap \zeta (B))$, which is a compact set.
  \qed
\end{proof}
%

For the less general version of the theorem, given arbitrary
$f,g \in \RR[t][\D;']$, we define
\[
\calS = \calS(f,g)=\left \{ \phi(\ftil, \gtil) \; | \; \ftil,\gtil \in
  \RR[t][\D;']
  ~~\textrm{such that}~~
  \parbox{3cm}{
    $\dvec(\ftil) \leq \dvec (f)$\\
    $\dvec (\gtil) \leq \dvec(g)$
   }
  \right\}.
\]
We observe that $\calS$ is a closed subset of $\RR^{(M+N+2)(d+1)}$, where
$\degD f =M$, $\degD g = N$ and $d = \max\{ \deg_t f, \deg_t g\}$.  The set
$\calS$ corresponds to the combined coefficient vectors of $\ftil$ and $\gtil$
that have the same degree structure as $f$ and $g$.


\begin{corollary}
  \label{cor:global-min-spec}
  Let $f,g \in \RR[t][\D;']\backslash \{0 \}$, let
  $d=\max\{ \deg_t f, \deg_t g \}$, $\degD f=M$, $\degD g=N$ and
  $D \leq \min \{M,N\}$.  For a given $\Omega\in \RR_{>0}$ we define the set of
  possible solutions by
  \[
  \calF_{\Omega} = \left \{
    \begin{array}{cc}
      (\ftil,\gtil) \in \RR[t][\D;'] \times \RR[t][\D;'] \text{ such that }
      &  \degD \ftil = M, \\ & \degD \gtil = N, \\
      &  \phi(\ftil,\gtil) \in \calS,\\
      &  \degD \htil \geq D, \\
      &  \htil = \gcrd(\ftil,\gtil) ,\\ & \norm{\htil} \leq \Omega, \\
      &  \hspace*{-13pt}\text{ and } \lcoeff_t (\lcoeff_\D (\htil)) =1\\
      %
    \end{array} \right \}.
  \]
  Suppose that $\calF_{\Omega} \neq \emptyset$. Then the minimization problem
  \begin{equation*}
    \min_{(\ftil,\gtil) \in \calF_\Omega} \norm{f-\ftil}_2^2 + \norm{g-\gtil}_2^2
  \end{equation*}
  has an attainable global minimum.
\end{corollary}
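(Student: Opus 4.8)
The plan is to obtain the corollary as a direct specialization of Theorem~\ref{thm:global-min}, with the generic predicate $\Xi$ instantiated to encode the degree-structure constraint $\phi(\ftil,\gtil)\in\calS$. First I would fix the ambient dimension $k = (M+N+2)(d+1)$ matching the codomain of $\phi$, and define the predicate $\Xi\colon\RR^{k}\to\{\true,\false\}$ by $\Xi(x)=\true$ if and only if $x\in\calS$; equivalently $\Xi(\phi(\ftil,\gtil))=\true$ exactly when $\dvec(\ftil)\le\dvec(f)$ and $\dvec(\gtil)\le\dvec(g)$. With this choice, the set $\calF_\Omega$ appearing in the corollary coincides verbatim with the set $\calF_\Omega$ in Theorem~\ref{thm:global-min}: the clauses $\degD\ftil=M$, $\degD\gtil=N$, $\degD\htil\ge D$, $\htil=\gcrd(\ftil,\gtil)$, $\norm{\htil}\le\Omega$ and $\lcoeff_t(\lcoeff_\D\htil)=1$ are identical, and the remaining clause $\phi(\ftil,\gtil)\in\calS$ is precisely $\Xi(\phi(\ftil,\gtil))=\true$.

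The one hypothesis of Theorem~\ref{thm:global-min} that is not immediately inherited is that $\Xi^{-1}(\true)$ be a topologically closed subset of $\RR^{k}$ in the Euclidean norm. But $\Xi^{-1}(\true)=\calS$, and the text already records the observation that $\calS$ is a closed subset of $\RR^{(M+N+2)(d+1)}$. I would make this explicit: $\calS$ is cut out by finitely many conditions of the form ``the coefficient of $t^{j}$ in $\ftil_i$ vanishes'' (for each pair $(i,j)$ with $j>\deg_t f_i$) together with the analogous conditions for $\gtil$; each such condition defines a coordinate hyperplane, and $\calS$ is their finite intersection, hence a linear subspace and in particular closed. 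This verifies the closedness hypothesis.

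Having matched the sets and checked the closedness of $\Xi^{-1}(\true)$, the remaining hypotheses of Theorem~\ref{thm:global-min} — namely $f,g\in\RR[t][\D;']\setminus\{0\}$, $d=\max\{\deg_t f,\deg_t g\}$, $\degD f=M$, $\degD g=N$, $D\le\min\{M,N\}$, and $\calF_\Omega\neq\emptyset$ — are exactly the hypotheses of the corollary. Applying Theorem~\ref{thm:global-min} then yields that $\min_{(\ftil,\gtil)\in\calF_\Omega}\norm{f-\ftil}_2^2+\norm{g-\gtil}_2^2$ is attained, which is the assertion of the corollary. I do not anticipate any genuine obstacle here; the only point requiring a line of care is the closedness of $\calS$, and that is a routine observation about coordinate subspaces. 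If one wanted the statement to be completely self-contained, one could alternatively note that $\dvec(\ftil)\le\dvec(f)$ is equivalent to finitely many linear equalities on the entries of $\pvec\ftil$, so no limiting argument beyond "finite intersection of closed sets is closed" is needed.
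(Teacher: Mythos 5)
Your proposal is correct and is exactly the argument the paper intends: the corollary is the specialization of Theorem~\ref{thm:global-min} obtained by taking $\Xi$ to be the indicator of $\calS$, whose closedness (as a coordinate subspace determined by the vanishing of the coefficients of $t^j$ beyond the prescribed degree bounds) is the only hypothesis needing verification and is already recorded in the text. Your explicit check that $\calS$ is a finite intersection of coordinate hyperplanes, hence closed, fills in the one detail the paper leaves implicit.
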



We note that Theorem \ref{thm:global-min} does not guarantee a unique minimum of
$\Phi$, merely that $\Phi$ has an attainable minimum (as opposed to an
infimum). The choice of $h,\fstar$ and $\gstar$ that we optimize over is
important.  If $\lcoeff_t \lcoeff_\D h$ vanishes or $\norm{h_0},\ldots,
\norm{h_{D-1}}$ are quite large, then
$\fstar$ and $\gstar$ can be ill-conditioned in the approximate GCRD problem.
Furthermore, choosing overly
large, small or poor degree structure in $t$ for $h$ can result in a $\Phi$ that
cannot be minimized for the specified structure, but would otherwise have a
minimum for a different choice of $h$.

\begin{example}
  Consider $f=\D^2-2\D+1$ and $g=\D^2+2\D+2$ (see \citep{KYZ07b} for
 an example with complex perturbations). Then $f$ and $g$ do not have a
  degree 1 approximate GCRD. That is, we show that there does not exist
  $\ftil,\gtil \in \RR[t][\D;']$ where $\degD \gcrd(\ftil,\gtil) =1$ and
  $\norm{f-\ftil}_2^2 + \norm{g-\gtil}_2^2$ is minimized.

  The real monic Karmakar-Lakshman distance \citep{KarLak96,KarLak98} of
  \[
  \norm{f-\ftil}_2^2 + \norm{g-\gtil}_2^2
  \]
  occurs when the rational function
  \begin{equation*}
    \frac{2h_0^4+14h_0^2+4h_0 + 5}{h_0^4+h_0^2+1}
  \end{equation*}
  is minimized for $h_0 \in \RR$. The minimum value (if it exists) of this
  function corresponds to the approximate GCRD $h=\D - h_0$.
  The infimum is $2$, which is unattainable. There is no attainable global
  minimum.

  The non-monic real Karmakar-Lakshman distance is $2$, which is achieved if and
  only if the leading coefficient vanishes. The minimum occurs when the
  rational function
  \[
  \frac{5h_1^4 - 4h_1^3+14h_1^2+2}{h_1^4 + h_1^2 +1}
  \]
  is minimized. The minimum value of this function corresponds to the
  approximate GCRD $h = h_1 \D +1$.

  In particular, if we consider $\ftil = (-2\D+1)(\varepsilon \D +1)$ and
  $\gtil = (2\D +2)(\varepsilon \D+1)$, then
  $\norm{f-\ftil}_2^2 + \norm{g-\gtil}_2^2$ becomes arbitrarily near $2$ as
  $\varepsilon \to 0$.

  There is no real degree $1$ approximate GCRD, as
  \[
  \min_{\left \{ (\ftil,\gtil) \in \RR[t][\D;']^2 \; | \; \degD
      \text{\gcrd}(\ftil,\gtil) =1 \right \} } \norm{f-\ftil}_2^2 +
  \norm{g-\gtil}_2^2
  \]
  is not defined in the monic case. In the non-monic case, if a minimum exists
  then it occurs when $\lcoeff_\D h$ vanishes, so the minimum value is not
  defined either.

  This example illustrates that not all $f,g \in \RR[t][\D;']$ have an
  approximate GCRD. Furthermore, we see that the requirement that
  $\lcoeff_t \lcoeff_\D h =1$ and $\norm{h}$ is bounded, from
  Theorem~\ref{thm:global-min} are required, even if there are no additional
  constraints imposed.
\end{example}


Now it remains to show that it is possible to obtain a (locally) unique solution
to $\Phi$.  One of many equivalent conditions for uniqueness of an exact GCRD,
is to require it to be primitive and have a monic leading coefficient.
Numerically, to obtain a unique solution of the approximate GCRD problem, we
impose the same constraints, making solutions locally unique.

\subsection{Convergence of Newton Iteration and Conditioning}
\label{sec:newtconv}

From Theorem \ref{thm:global-min} and Corollary \ref{cor:global-min-spec} we
know a solution to the approximate GCRD problem exists.  We now show that a
standard Newton iteration will converge quadratically when starting with an
estimate sufficiently close to an approximate GCRD.  We first describe the
Jacobian of the residuals and show that the Jacobian has full rank. This leads
to a first-order approximation of the Hessian matrix showing that it is locally
positive definite around a global minimum when the residual is sufficiently
small.  The implication is that Newton's method will converge
quadratically~\citep{Boyd2004}.  If we consider structured perturbations, then
we are able to obtain results similar to that of \cite{ZenDay04} to the overall
conditioning of the system.

In this section we assume without loss of generality that
$\fstar, \gstar \in \RR[t][\D;']$ are primitive, and that $f$ and $g$ may no
longer be primitive to simplify computations.  We need to clear fractions of
rational functions to apply our coefficient norms, and to linearize $h,\fstar$
and $\gstar$ as vectors of real numbers.


The residual of the approximate GCRD is
\begin{align*}
  r &=r(h,\fstar, \gstar)  \\
    &= (\pvec{\fstar h} - \pvec f, \pvec {\gstar h}-\pvec g)^\tran \in \RR^{\eta \times 1},
\end{align*}
where
\begin{align*}
\eta & = \sum_{0 \leq i \leq M} \max\{ \deg_t f_i,-1\} + \sum_{0 \leq i \leq N} \max\{ \deg_t g_i,-1\} + (M+1)+(N+1) \\
     & \leq  (M+N+2)(d+1).
\end{align*}
Intuitively, $\eta $ represents the number of components of
$(\pvec f, \pvec g) \in \RR^{1\times \eta }$.  Let $\nu$ be the number of
variables needed to represent the coefficients of $h,\fstar$ and $\gstar$, i.e.
$(\pvec h, \pvec \fstar, \pvec \gstar) \in \RR^{1\times \nu}$.


Recall that when $f=\fstar h$, we can linearize this relationship with
differential convolution matrices, by writing
\[
f = (\fstar_0,\ldots, \fstar_{M-D}) \rconv{M-D}(h).
\]
If $f_i$ is a coefficient of $f$ with $\deg_t f_i = d$, then we may write
\[
f_i = \sum_{0\leq j \leq M-D} \fstar_j (\rconv{M-D} (h) [j,i]).
\]
This relationship may be linearized over $\RR$ through the use of convolution
matrices.  Writing
\[
\pvec {f_i} = \sum_{0 \leq j \leq M-D} \pvec{\fstar_j} \cdot C_{d} \left(
  \rconv{M-D}(h)[j,i]\right)^\tran,
\]
we now have a direct method of computing $\pvec{f_i}$ in terms of the
coefficients of $\fstar$ and~$h$.

If we differentiate $\pvec{\fstar h}$ with respect to an entry from
$\pvec{\fstar}$, then we will obtain the corresponding (linearized) row of
$\rconv{M-D}(h)$.  Similarly, differentiating $\pvec{ \fstar h}$ with respect to
an entry of $\pvec{h}$ will give us a (linearized) column of
$\lconv{D}(\fstar)$.  This relationship becomes clear when we observe that
\[
(\fstar_0,\ldots, \fstar_{M-D}) \rconv{M-D}(h) = \left( \lconv{D}
  (\fstar) \begin{pmatrix}h_0\\ \vdots \\ h_D \end{pmatrix}\right)^\tran .
\]
Differentiating $\pvec{\gstar h}$ with respect to variables from $\pvec{\gstar}$
and $\pvec{h}$ will produce similar results.


%

The Jacobian of $r(h,\fstar,\gstar)$ for arbitrary $h,\fstar$ and $\gstar$ may
be expressed (up to column permutation) in block matrix form as
\[
J= \begin{pmatrix}
  \rconv{M-D} (h)^\tran	 & 	0 		& \lconv{D} (\fstar) \\
  0 &\rconv{N-D}(h)^\tran & \lconv{D} (\gstar)
\end{pmatrix} \in \RR^{\eta \times \nu},
\]
where the block matrices are linearized accordingly. In our formulation of the
approximate GCRD problem we normalize $\lcoeff_t \lcoeff_\D h$ so that it is a
predetermined constant, which results in essentially the same Jacobian as
described above.

The only difference in the Jacobians, is that the $\nu^{th}$ column would become
the zero column if differentiated with respect to $\lcoeff_t \lcoeff_\D h$,
since $\lcoeff_t \lcoeff_\D h$ is constant. When normalized for computational
purposes, the Jacobian belongs to $\RR^{ \eta \times \nu -1}$ instead (the last
column is deleted).  In the general case when $\gcrd(\fstar,\gstar)=1$, $J$ is
rank deficient by 1 and the $\nu^{th}$ column is a linear combination of the
other columns. The following lemma, similar to \citep[Lemma 4.1]{Zeng11}, formalizes this
statement.

%
\begin{lemma}\label{lem:unique-soln}
  Let $r$ be the residual described earlier with Jacobian $J$. Suppose that
  $\lcoeff_t \lcoeff_\D h$ is a fixed non-zero constant. If
  $\gcrd(\fstar, \gstar) =1$, then all non-zero columns of $J$ are linearly
  independent.
\end{lemma}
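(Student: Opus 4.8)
The plan is to show that a nontrivial left-null combination of the columns of $J$ forces a common right factor of $\fstar$ and $\gstar$, contradicting $\gcrd(\fstar,\gstar)=1$. Suppose $Jc = 0$ for some nonzero real vector $c$ partitioned to match the block structure, say $c = (\pvec a, \pvec b, \pvec e)^\transpose$, where $\pvec a$ corresponds to the $\rconv{M-D}(h)^\transpose$ block (coefficients of a differential polynomial $a$ with $\degD a \le N$), $\pvec b$ to the $\rconv{N-D}(h)^\transpose$ block (a polynomial $b$ with $\degD b \le M$), and $\pvec e$ to the two $\lconv{D}$ blocks (a polynomial $e$ with $\degD e \le D$; note the last coordinate, $\lcoeff_t\lcoeff_\D h$, is omitted but $e$ can still carry a nonzero entry there since we are in the $\nu-1$ column setting and must check whether the deleted column would have been needed — more on this below). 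The two block rows of $Jc = 0$ then read, after undoing the linearizations via the multiplication identities established just before the lemma, $a\,h + \fstar\,e = 0$ and $b\,h + \gstar\,e = 0$ as identities in $\RR[t][\D;']$.

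From here the argument is the standard cofactor/coprimeness manipulation. From $\fstar e = -a h$ and $\gstar e = -b h$, any common right divisor of $\fstar$ and $\gstar$ divides... actually the cleaner direction: I would first argue $e \ne 0$. If $e = 0$ then $a h = 0$ and $b h = 0$; since $h \ne 0$ in the Euclidean domain $\RR(t)[\D;']$ and $\RR(t)[\D;']$ has no zero divisors (Proposition~\ref{prop:diff-poly-basics}(i)), we get $a = b = 0$, so $c = 0$, contradiction. So $e \ne 0$. Now $h$ is a common right divisor of $\fstar e$ and $\gstar e$; write $\fstar e = (-a) h$ and $\gstar e = (-b) h$. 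Using that $\gcrd(\fstar,\gstar) = 1$, there exist (Proposition~\ref{prop:diff-poly-basics}, Bézout side) $\sigma,\tau \in \RR(t)[\D;']$ with $\sigma \fstar + \tau \gstar = 1$; multiply on the right by $e$ to get $\sigma \fstar e + \tau \gstar e = e$, i.e. $(-\sigma a - \tau b) h = e$, so $h$ divides $e$ on the right. But $\degD e \le D = \degD h$, so $e = \gamma h$ for a unit $\gamma \in \RR(t)$; since $\degD e \le D$ this is consistent, and comparing the coefficient of $\D^D$: $\lcoeff_\D e = \gamma \lcoeff_\D h$. Substituting back, $\fstar \gamma h = -a h$ gives $a = -\gamma \fstar$ (cancel $h$ on the right, no zero divisors), and similarly $b = -\gamma \gstar$.

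The final step handles the normalization subtlety and produces the contradiction. We have $(\pvec a, \pvec b) = -\gamma(\pvec \fstar, \pvec \gstar)$ and $\pvec e = \gamma \pvec h$, so $c$ is (a scalar multiple of) the coefficient vector of $(-\fstar, -\gstar, h)$ — which is exactly the direction of the $\nu$-th column that was deleted in the normalized Jacobian. Concretely: the full (unnormalized) Jacobian $\tilde J \in \RR^{\eta \times \nu}$ satisfies $\tilde J \cdot (\pvec{-\fstar}, \pvec{-\gstar}, \pvec h)^\transpose = (\pvec{-\fstar h} + \pvec{\fstar h}, \ldots)^\transpose = 0$ identically, by the product rule $\frac{\partial}{\partial h}(\fstar h)\cdot h + \frac{\partial}{\partial \fstar}(\fstar h)\cdot \fstar = 2\,\pvec{\fstar h}$-type bilinearity — so the deleted column is $-1$ times the linear combination of the others with coefficients $(\pvec{-\fstar},\pvec{-\gstar})$ and the penultimate entries of $\pvec h$. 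Since $\lcoeff_t \lcoeff_\D h$ is a fixed nonzero constant, the $\nu$-th entry of this null vector, namely $\gamma \cdot \lcoeff_t\lcoeff_\D h$, is nonzero unless $\gamma = 0$; and $\gamma = 0$ would force $e = 0$, already excluded. Hence in the normalized Jacobian (columns indexed $1,\dots,\nu-1$) the only way to get $Jc=0$ with the $\nu$-th coordinate amputated is $c=0$. Therefore all nonzero columns of $J$ are linearly independent.

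The main obstacle I anticipate is bookkeeping rather than mathematics: matching the linearized block structure of $J$ precisely to the polynomial identities $ah + \fstar e = 0$, $bh + \gstar e = 0$ (getting the transposes, the $\lconv{}$ versus $\rconv{}$ roles, and the column/variable correspondence right), and cleanly arguing that the one deleted column is the unique obstruction — i.e. that rank deficiency of the unnormalized $J$ is exactly $1$ and is killed by the normalization. The zero-divisor-free cancellations and the Bézout step are routine given Proposition~\ref{prop:diff-poly-basics}; the degree bound $\degD e \le D$ is what upgrades "$h$ right-divides $e$" to "$e$ is a unit multiple of $h$", and that is the crux that pins $c$ to the single forbidden direction.
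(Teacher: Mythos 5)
Your proposal is correct and follows essentially the same route as the paper's proof: translate a nullvector of the normalized Jacobian into the Ore identities $ah+\fstar e=0$, $bh+\gstar e=0$, use $\gcrd(\fstar,\gstar)=1$ to force $e$ to be a unit multiple of $h$, and let the fixed nonzero $\lcoeff_t\lcoeff_\D h$ (your deleted column, the paper's appended row $\vec e_\nu$) annihilate that multiple. The only differences are cosmetic: you pin down $e=\gamma h$ via a B\'ezout identity together with $\degD e\le D$ where the paper uses an LCLM manipulation, and there are harmless bookkeeping slips ($\degD a\le M-D$ and $\degD b\le N-D$, not $N$ and $M$; and $a=-\fstar\gamma$ rather than $-\gamma\fstar$ until $\gamma$ is known to lie in $\RR$, a point the paper is equally terse about).
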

\begin{proof}
  Let $ \vec e_\nu \in \RR^{1\times \nu}$ be a unit vector whose last component
  is $1$. We write
  \[
  \vec e_\nu (0,\ldots,0, \pvec h)^\tran = \lcoeff_t \lcoeff_\D h \neq 0.
  \]
  We shall prove the equivalent statement that the matrix
  \[
  \begin{pmatrix}
    J \\
    \vec e_\nu
  \end{pmatrix} =
  \begin{pmatrix}
    \rconv{M-D} (h)^\tran & 	0 		& \lconv{D} (\fstar) \\
    0			  &\rconv{N-D}(h)^\tran	& \lconv{D} (\gstar) \\
    & & \vec e_\nu
  \end{pmatrix} \in \RR ^{(\eta+1) \times \nu}
  \]
  has full rank.

Suppose the converse holds, then there exists $q_1,q_2,p \in
\RR[t][\D;']$ with
$\degD q_1\leq M-D, \degD q_2 \leq N-D$ and $\degD p \leq D$ such
that their
  combined coefficient vector satisfies
  \[
  \begin{pmatrix}
    J \\
    \vec e_\nu
  \end{pmatrix}
  \begin{pmatrix}
    \pvec{q_1}^\tran \\
    \pvec{q_2}^\tran \\
    -\pvec{p}^\tran
  \end{pmatrix}
  =
  \begin{pmatrix}
    0\\
    0\\
    0
  \end{pmatrix}.
  \]
  Expressing this as multiplication over $\RR[t][\D;']$, we have that
  \begin{align*}
    \fstar p &= q_1 h, \\
    \gstar p &= q_2 h.
  \end{align*}
  We conclude that $\gcrd(\fstar p, \gstar p) = p$, as
  $\gcrd(\fstar, \gstar)=1$. If $p=0$ or $q_1=0$ or $q_2=0$, then we are done
(as $\RR[t][\D;']$ is a
  domain). Suppose that $p\neq 0$ and $q_1 \neq 0$ and
$q_2 \neq 0$. Accordingly
we must also have that     $\gcrd(q_1 h, q_2 h) = \gcrd(q_1,q_2) \alpha h = p$
for some $\alpha   \neq 0$. Since $\degD p \leq   \degD h $ it follows that
$\gcrd(q_1,q_2)=1$ so $p=\alpha h$.

%
%
  Since $ p= \alpha h$ we must have that
  $\alpha \fstar = q_1$ and $\alpha \gstar = q_2$.  Now,
  \[
  \vec e_\nu (0,\ldots, 0, \pvec h)^\tran = \lcoeff_t \lcoeff_\D h \neq 0.
  \]
  On the other hand,
  \[
  \vec e_\nu (0,\ldots, 0, \alpha \pvec h)^\tran = 0.
  \]
  This occurs if and only if $\alpha =0$.  But in this case $p=0$ as well, so
  \[
  \begin{pmatrix}
    \pvec{q_1}^\tran \\
    \pvec{q_2}^\tran \\
    -\pvec{p}^\tran
  \end{pmatrix} =
  \begin{pmatrix}
    0 \\
    0 \\
    0 \\
  \end{pmatrix}.
  \]
  It follows that the only vector in the null space is the zero vector, hence
  $ \begin{pmatrix}
    J \\
    \vec e_\nu
  \end{pmatrix}$
  has full rank.  Since any subset of linearly independent vectors is also
  linearly independent, we have that when $\lcoeff_t \lcoeff_\D h$ is a fixed
  non-zero constant that $J$ has rank $\nu-1$.
  \qed
\end{proof}
Note that from the proof we see that if $\lcoeff_t \lcoeff_\D h$ were not fixed,
then the vector $( \pvec \fstar, \pvec \gstar, \pvec h)^\tran$ forms a basis for
the nullspace of $J$. Intuitively, if we did not fix $\lcoeff_t \lcoeff_\D$ in
advance, then there would be infinitely many tuples of $(h,\fstar,\gstar)$ with
the same degree structure over $\RR[t]$ that minimized $\Phi$, since for any
$\alpha \neq 0$ we have
\[
\norm{f-\fstar h}_2^2 + \norm{g-\gstar h}_2^2 = \norm{f- (\alpha \fstar) (
  \alpha ^{-1} h)}_2^2 + \norm{g-(\alpha \gstar) (\alpha ^{-1} h)}_2^2 .
\]
In other words, we need to normalize $h$ in advance to obtain a unique solution.

\begin{corollary}\label{cor:locally-unique-soln}
  Let $r$ be the residual defined earlier in this section with
  $\lcoeff_t \lcoeff_\D h$ a non-zero constant.  If $r= 0$, then the Hessian
  matrix $\nabla^2 \Phi (h,\fstar,\gstar) $ is positive definite.
\end{corollary}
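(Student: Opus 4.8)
The plan is to exploit the standard Gauss--Newton decomposition of the Hessian of a sum-of-squares objective. Since $\Phi(h,\fstar,\gstar) = \norm{f-\fstar h}_2^2 + \norm{g-\gstar h}_2^2 = \norm{r}_2^2$ where $r = r(h,\fstar,\gstar)$ is the residual vector, the gradient is $\nabla \Phi = 2 J^\tran r$ and the Hessian decomposes as
\[
\nabla^2 \Phi(h,\fstar,\gstar) = 2 J^\tran J + 2 \sum_{k} r_k \nabla^2 r_k,
\]
where $r_k$ is the $k^{th}$ component of $r$ and $J$ is the Jacobian described above. First I would invoke the hypothesis $r = 0$: the second-order term vanishes identically, leaving $\nabla^2\Phi = 2 J^\tran J$, which is automatically positive semidefinite. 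It then remains to show $J^\tran J$ is in fact positive definite, i.e.\ that $J$ has trivial nullspace when restricted to the variables actually being optimized (recall $\lcoeff_t\lcoeff_\D h$ is a fixed constant, so the corresponding column of $J$ is deleted and $J \in \RR^{\eta\times(\nu-1)}$).

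The key observation is that positive definiteness of $J^\tran J$ is equivalent to $J$ having full column rank $\nu - 1$. But this is exactly the content of Lemma~\ref{lem:unique-soln}, provided we can establish its hypothesis $\gcrd(\fstar,\gstar) = 1$. Here I would use $r = 0$ again, but now in its algebraic meaning: $r = 0$ says precisely that $\fstar h = f$ and $\gstar h = g$ exactly (as elements of $\RR[t][\D;']$, after clearing fractions under the primitivity normalization of this subsection). Hence $h$ is a common right divisor of $f$ and $g$. Since the Main Problem assumes $\gcrd(f,g) = 1$ for the input --- but wait, here $f,g$ play the role of the \emph{nearby} polynomials $\ftil,\gtil$ in the optimization; so I should instead argue that $h = \gcrd(\ftil,\gtil)$ by the setup of the objective function, which forces $\gcrd(\fstar,\gstar) = 1$: if $\fstar$ and $\gstar$ shared a nontrivial common right factor $k$, then $hk^{\text{(as a left factor)}}$ --- more carefully, writing $\fstar = \fstarhat k$, $\gstar = \gstarhat k$ with $\degD k \geq 1$, we would get $f = \fstarhat (kh)$ and $g = \gstarhat(kh)$, so $kh$ is a common right divisor of degree $> \degD h = D$, contradicting $h = \gcrd(\ftil,\gtil)$ having degree exactly $D$. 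Thus $\gcrd(\fstar,\gstar) = 1$, Lemma~\ref{lem:unique-soln} applies, $J$ has rank $\nu - 1$, and $\nabla^2\Phi = 2J^\tran J$ is positive definite.

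The main obstacle I anticipate is bookkeeping around the normalization and the precise column structure of $J$: one must be careful that the single deleted column (corresponding to the fixed leading coefficient $\lcoeff_t\lcoeff_\D h$) is exactly the one accounting for the rank deficiency identified in Lemma~\ref{lem:unique-soln}, so that the reduced Jacobian genuinely has \emph{full} column rank rather than merely rank $\nu-1$ inside a space of dimension $\nu$. A secondary subtlety is the implicit identification of the residual being zero with exact divisibility --- this requires that $\eta$ is large enough (which it is, by the degree bounds $\degD \fstar = M - D$, $\degD \gstar = N-D$) that the linearized products $\pvec{\fstar h}$ and $\pvec{\gstar h}$ capture \emph{all} coefficients of $f$ and $g$, so that $r = 0$ in $\RR^\eta$ truly implies $\fstar h = f$ and $\gstar h = g$ as differential polynomials. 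Once these two points are checked, the argument is short: it is just the Gauss--Newton identity plus Lemma~\ref{lem:unique-soln}.
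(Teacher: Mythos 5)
Your proposal is correct and follows essentially the same route as the paper's proof: the Gauss--Newton identity $\nabla^2\Phi = 2J^\tran J + 2\sum_k r_k\nabla^2 r_k$, the vanishing of the second-order term when $r=0$, and positive definiteness of $J^\tran J$ via the full-rank statement of Lemma~\ref{lem:unique-soln}. Your extra step verifying the hypothesis $\gcrd(\fstar,\gstar)=1$ (from $h=\gcrd(\ftil,\gtil)$ having maximal degree) is a detail the paper leaves implicit in the standing assumptions of Section~\ref{sec:opt-gcrd}, and it is argued correctly.
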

\begin{proof}
  Let $J$ be the Jacobian of $r$. $J$ has full rank, so $J^\tran J$ has full
  rank and is positive semidefinite. If $r=0$, at the global minimum we have
  that $2J^\tran J = \nabla ^2 \Phi$, and $\nabla^2 \Phi (h,\fstar,\gstar)$ is
  positive definite.
  \qed
\end{proof}

When there is no residual, the Hessian $\nabla^2 \Phi (h,\fstar,\gstar)$ is
positive definite. It follows that if $f$ and $g$ are perturbed by a
sufficiently small amount, then $\nabla^2 \Phi$ remains locally positive
definite, and Newton iteration will converge to the (local) global minimum with
an initial guess that is sufficiently close.

We are able to obtain a condition number for a structured perturbation through
the Jacobian of the residuals. Since $J$ has full rank, the smallest singular
value $\sigma_{\nu-1}$ of $J(r(h,\fstar, \gstar))$ is strictly positive. If we
consider structured perturbations, then we are able to show that the approximate
GCRD problem is (locally) well-posed.

In the next lemma, we make use of the fact that for any $f \in \RR[t][\D;']$,
we have that $\norm{f}_2 = \norm{ \pvec {f}}_2$.
%
%
\begin{lemma}
  Let $f,g,h,\fstar,\gstar \in \RR[t][\D;']$ be such that
  $\Phi(h,\fstar,\gstar) <\varepsilon$ for some $\varepsilon >0$, with
  $\lcoeff_t \lcoeff_\D h$ a fixed non-zero constant.  Suppose
  $\fhat, \ghat, \hhat, \fstarhat, \gstarhat \in \RR[t][\D;']$ possess the same
  degree structures as $f,g,h,\fstar $ and $\gstar$ and that
  \[
  \widehat \Phi(\hhat,\fstarhat,\gstarhat)= \norm{\fhat -\fstarhat \hhat}_2^2 +
  \norm{\ghat -\gstarhat \hhat}_2^2 <\varepsilon.
  \]
  Then,
  \[
  \tallnorm{ 
    (h-\hhat, \fstar -\fstarhat, \gstar -\gstarhat)
  }_2^2 \leq \frac{1}{\sigma_{\nu-1}^2} \left ( 2\varepsilon + \tallnorm{
      (f-\fhat, g-\ghat)
    }_2^2 \right ) ~+ ~\parbox{2.4cm}{higher order\\ terms.}
  \]
\end{lemma}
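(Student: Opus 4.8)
The statement is a sensitivity bound: a small objective value and a small perturbation of the input data $(f,g)\mapsto(\fhat,\ghat)$ translate into a controlled perturbation of the solution $(h,\fstar,\gstar)\mapsto(\hhat,\fstarhat,\gstarhat)$, up to higher-order terms. The plan is to linearize the residual map $r(h,\fstar,\gstar)$ around $(h,\fstar,\gstar)$ using the Jacobian $J$ computed just above, and to invoke the full-rank property from Lemma~\ref{lem:unique-soln}. Concretely, write $\delta = (h-\hhat,\ \fstar-\fstarhat,\ \gstar-\gstarhat)$ linearized over $\RR$, i.e. as $(\pvec{h}-\pvec{\hhat},\pvec{\fstar}-\pvec{\fstarhat},\pvec{\gstar}-\pvec{\gstarhat})^\tran$. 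Since the $\nu^{th}$ coordinate (the normalized leading coefficient $\lcoeff_t\lcoeff_\D h$) is fixed for both triples, $\delta$ actually lies in $\RR^{\nu-1}$, so $J$ acts on it injectively and $\norm{J\delta}_2 \ge \sigma_{\nu-1}\norm{\delta}_2$, where $\sigma_{\nu-1}>0$ is the smallest singular value of $J$. This inequality is what ultimately produces the factor $1/\sigma_{\nu-1}^2$ in the conclusion.

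The next step is to estimate $\norm{J\delta}_2$ from above. By Taylor expansion of the (multi)linear-ish map $r$ — it is actually quadratic in its arguments, being built from products $\fstar h$ and $\gstar h$ — we have
\[
r(\hhat,\fstarhat,\gstarhat) - r(h,\fstar,\gstar) = -J\,\delta + (\text{quadratic in }\delta),
\]
where the quadratic remainder collects terms like $(\fstar-\fstarhat)(h-\hhat)$ and is exactly what gets swept into the ``higher order terms'' of the statement. Rearranging and applying the triangle inequality,
\[
\norm{J\delta}_2 \le \norm{r(h,\fstar,\gstar)}_2 + \norm{r(\hhat,\fstarhat,\gstarhat)}_2 + (\text{h.o.t.}).
\]
Now I would relate the two residual norms to the given data. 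We have $\norm{r(h,\fstar,\gstar)}_2^2 = \Phi(h,\fstar,\gstar) < \varepsilon$ directly. For the hatted residual, the subtlety is that $r(\hhat,\fstarhat,\gstarhat)$ is measured against $(f,g)$ whereas $\widehat\Phi$ measures $(\fstarhat\hhat,\gstarhat\hhat)$ against $(\fhat,\ghat)$; so insert and remove $(\fhat,\ghat)$:
\[
\norm{\pvec{\fstarhat\hhat}-\pvec f}_2 \le \norm{\pvec{\fstarhat\hhat}-\pvec{\fhat}}_2 + \norm{\pvec{\fhat}-\pvec f}_2,
\]
and similarly for $g$, using $\norm{p}_2 = \norm{\pvec p}_2$ for any $p\in\RR[t][\D;']$. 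Squaring, using $(a+b)^2 \le 2a^2+2b^2$, and combining with $\widehat\Phi < \varepsilon$ and $\norm{(f-\fhat,g-\ghat)}_2^2$, gives a bound of the shape $\norm{r(\hhat,\fstarhat,\gstarhat)}_2^2 \lesssim \varepsilon + \norm{(f-\fhat,g-\ghat)}_2^2$. Squaring the chain $\sigma_{\nu-1}\norm{\delta}_2 \le \norm{J\delta}_2 \le \dots$, absorbing cross terms and the quadratic remainder into ``higher order terms,'' and tracking constants so the leading contribution is $2\varepsilon + \norm{(f-\fhat,g-\ghat)}_2^2$, yields the claimed inequality.

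The main obstacle, and the place where care is genuinely needed, is bookkeeping the ``higher order terms'' honestly: one must verify that every term not displayed is genuinely second order in the small quantities $\delta$, $\varepsilon^{1/2}$, and $\norm{(f-\fhat,g-\ghat)}_2$ — in particular the cross term $2\norm{r(h,\fstar,\gstar)}_2\norm{r(\hhat,\fstarhat,\gstarhat)}_2$ from squaring a sum, and the $J\delta$-against-quadratic-remainder cross term, both of which are $O(\varepsilon)$ or $O(\varepsilon^{1/2}\norm{\delta})$ and hence higher order relative to the displayed $2\varepsilon$. A secondary technical point is confirming that $\delta$ truly has a zero in its $\nu^{th}$ slot, so that the injectivity bound via $\sigma_{\nu-1}$ applies; this follows because both triples are constrained to have the same fixed $\lcoeff_t\lcoeff_\D h$, which is precisely the normalization hypothesis. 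Everything else is the routine Taylor/triangle-inequality manipulation sketched above.
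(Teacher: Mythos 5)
Your proposal is correct and follows essentially the same route as the paper's proof: linearize $(\fstar h-\fstarhat\hhat,\,\gstar h-\gstarhat\hhat)$ as $J\delta$ plus a quadratic remainder, bound $\norm{\delta}_2$ below via the smallest singular value $\sigma_{\nu-1}$ of the full-rank Jacobian (the paper phrases this through $\norm{J^+}_2=1/\sigma_{\nu-1}$, you through injectivity on the fixed-leading-coefficient slice), and split the product difference by the triangle inequality through $(\fhat,\ghat)$ to get $\Phi+\widehat\Phi+\norm{(f-\fhat,g-\ghat)}_2^2$, with cross and quadratic terms absorbed into the higher-order terms. Your bookkeeping of those discarded terms is, if anything, slightly more explicit than the paper's.
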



\begin{proof}
  Let $J=J(r(h,\fstar,\gstar))$ be the Jacobian of the residuals from earlier in
  this section.  We have that
  \[
  ( \pvec{\fstar h} - \pvec{\fstarhat \hhat} ,\pvec{ \gstar h} - \pvec{\gstarhat
    \hhat})^\tran
  \approx J ( \pvec {h-\hhat}, \pvec{\fstar}-\pvec{\fstarhat},
  \pvec{g}-\pvec{\gstarhat})^\tran .
  \]
  Ignoring high order terms and using the well known fact that for a (left)
  pseudo inverse $J^+$ of $J$, that $\norm{J^+}_2 = \frac{1}{\sigma_{\nu-1}}$
  gives us
  \begin{align*}
    \tallnorm{(\pvec{\fstar h - \fstarhat \hhat, \gstar h- \gstarhat \hhat})^\tran }_2^2 & \approx
                                                                                           \tallnorm{ J  (\pvec{h-\hhat, \fstar-\fstarhat, \gstar-\gstarhat})^\tran}_2^2  \\ & \geq
                                                                                                                                                                               \sigma_{\nu-1}^2 \tallnorm{(\pvec{h-\hhat,\fstar-\fstarhat,\gstar-\gstarhat)} }_2^2.
  \end{align*} 
  A straightforward application of the triangle inequality gives
  \begin{align*}
    \Big\|(h-\hhat, & \fstar-\fstarhat, \gstar-\gstarhat) \Big\|_2^2 \\
    & \leq \frac{1}{\sigma_{\nu-1} ^2} \tallnorm{(\fstar h - \fstarhat \hhat, \gstar h- \gstarhat \hhat)}_2^2 \\
    & \leq \frac{1}{\sigma_{\nu-1} ^2}\left ( \Phi(h,\fstar,\gstar)+\widehat \Phi(\hhat, \fstarhat, \gstarhat) + \tallnorm{(f-\fhat,g-\ghat)}_2^2 \right)  \\
   & \leq \frac{1}{\sigma_{\nu-1}^2} \left (
    2\varepsilon +
   \tallnorm{(f-\fhat,g-\ghat)
   }_2^2\right) +
     \text{\it higher order terms. \quad\qed}
  \end{align*}
\end{proof}

 \begin{corollary}
   Suppose that $h_{opt}, \fstar_{opt}, \gstar_{opt} \in \RR[t][\D;']$ are a
   locally unique global minimum of $\Phi$ in some neighborhood around
   $h,\fstar$ and $\gstar$.  If
   \[
   \Phi(h,\fstar,\gstar) < \varepsilon \text{ and }\Phi(h_{opt}, \fstar_{opt},
   \gstar_{opt}) < \varepsilon\] for $\varepsilon >0$, then
   \[ \tallnorm{ 
     (h-h_{opt} , \fstar -\fstar_{opt},\gstar-\gstar_{opt})
   }_2^2 \leq \frac{2\varepsilon }{\sigma^2_{\nu-1}} + \text{higher order
     terms}.
   \]
 \end{corollary}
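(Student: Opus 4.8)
The plan is to derive this bound as an immediate specialization of the preceding lemma, taking the ``perturbed'' system to consist of the exact inputs $(f,g)$ together with the optimal triple $(h_{opt},\fstar_{opt},\gstar_{opt})$. First I would set $\fhat = f$ and $\ghat = g$ in the lemma, so that the term $\tallnorm{(f-\fhat, g-\ghat)}_2^2$ vanishes identically. With this choice the quantity $\widehat\Phi(\hhat,\fstarhat,\gstarhat)$ appearing in the lemma becomes $\norm{f-\fstarhat\hhat}_2^2 + \norm{g-\gstarhat\hhat}_2^2 = \Phi(\hhat,\fstarhat,\gstarhat)$, so substituting $(\hhat,\fstarhat,\gstarhat) = (h_{opt},\fstar_{opt},\gstar_{opt})$ turns the hypothesis $\Phi(h_{opt},\fstar_{opt},\gstar_{opt}) < \varepsilon$ into exactly the condition $\widehat\Phi(\hhat,\fstarhat,\gstarhat) < \varepsilon$ required there, while $\Phi(h,\fstar,\gstar) < \varepsilon$ is assumed directly. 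The degree-structure hypothesis of the lemma is met because $(h_{opt},\fstar_{opt},\gstar_{opt})$ lies in a small neighbourhood of $(h,\fstar,\gstar)$ and the optimization is carried out over a fixed degree structure (equivalently, once the relevant leading coefficients are pinned to non-zero constants, a sufficiently small perturbation cannot change any $\deg_t$), so the two triples agree coefficient-slot for coefficient-slot.

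Then I would simply quote the lemma: with $\sigma_{\nu-1}$ the smallest singular value of the Jacobian $J = J(r(h,\fstar,\gstar))$, which is strictly positive because $\lcoeff_t\lcoeff_\D h$ is a fixed non-zero constant and $\gcrd(\fstar,\gstar)=1$ (the latter holding since $h = \gcrd(\ftil,\gtil)$), Lemma~\ref{lem:unique-soln} and the established full rank of $J$ give
\[
\tallnorm{(h-h_{opt},\fstar-\fstar_{opt},\gstar-\gstar_{opt})}_2^2 \;\leq\; \frac{1}{\sigma_{\nu-1}^2}\bigl(2\varepsilon + 0\bigr) + \text{higher order terms},
\]
which is the claimed inequality. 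The higher-order terms are precisely those tracked in the proof of the lemma, coming from replacing $(\pvec{\fstar h} - \pvec{\fstar_{opt}h_{opt}},\pvec{\gstar h} - \pvec{\gstar_{opt}h_{opt}})$ by its first-order (Jacobian) approximation about $(h,\fstar,\gstar)$, together with the triangle inequality applied to $\Phi$ and $\widehat\Phi$.

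The one point that needs a remark rather than a computation is that $\sigma_{\nu-1}$ is evaluated at $(h,\fstar,\gstar)$ and not at the optimum. Since $J$ depends continuously — indeed polynomially, via the differential convolution matrices $\rconv{M-D}(h)$, $\rconv{N-D}(h)$, $\lconv{D}(\fstar)$, $\lconv{D}(\gstar)$ — on its arguments, and $(h_{opt},\fstar_{opt},\gstar_{opt})$ is near $(h,\fstar,\gstar)$, the difference between $\sigma_{\nu-1}(J(h,\fstar,\gstar))$ and the corresponding singular value at the optimum is of higher order and is absorbed into the error term. I do not expect any genuine obstacle here: the corollary is essentially the statement of the preceding lemma read off in the special case where one of the two nearby systems is the exact input pair $(f,g)$ and the other is the locally unique minimizer, both having small residual by hypothesis.
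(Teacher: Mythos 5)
Your proposal is correct and matches the paper's intent exactly: the corollary is obtained by specializing the preceding lemma with $\fhat=f$, $\ghat=g$ (so the $\tallnorm{(f-\fhat,g-\ghat)}_2^2$ term vanishes) and $(\hhat,\fstarhat,\gstarhat)=(h_{opt},\fstar_{opt},\gstar_{opt})$, which is precisely why the paper states it as an immediate corollary without separate proof. Your remarks on the matching degree structure and on $\sigma_{\nu-1}$ being evaluated at $(h,\fstar,\gstar)$ are consistent with the first-order, ``higher order terms'' framing already present in the lemma.
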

 If we compute different approximate GCRD pairs of $f$ and $g$ (using different
 optimization techniques or initial guesses), then we are able to bound the size
 of the perturbations of $\fstar,\gstar$ and $h$ based on how near they
 are. Furthermore, this corollary allows us to certify an upper bound on the
 distance between our computed approximate GCRD tuple and the actual global
 minimum.

%


\section{Implementation of Approximate GCRD}
\label{chap:implementation}
\label{sec:implementation}

%

This section discusses the particulars and implementation of the algorithms.
The algorithms are described in a Maple-like pseudo code, with Matlab style
matrix indexing.  All of the algorithms have been implemented in the Maple
programming language. For convenience, the notation and assumptions introduced
at the start of Section~\ref{sec:opt-gcrd} will hold, unless otherwise stated.
Additionally, we will assume that content from differential polynomials can be
removed numerically, as computed quantities are typically not primitive due to
round-off errors.

The matrices $S=S(f,g) \in \RR[t]^{(M+N) \times (M+N)}$ will be the differential
Sylvester matrix of $f$ and $g$, and
$\Shat = \Shat(f,g) \in \RR^{ (M+N)(\mu+1) \times (M+N)(\mu+d+1)}$ will be the
inflated differential Sylvester matrix of $f$ and $g$, where $\mu=2(M+N)d$.

The presentation and theoretical analysis of the algorithms is presented in a
bottom-up manner, reflecting their dependencies. Asymptotic upper bounds on the
number of floating point operations required are provided. Furthermore, we
discuss whether the output of the algorithm can be certified in some manner,
when applicable.


We demonstrate the robustness of our algorithms in practice. Specific examples
are provided to thoroughly demonstrate the steps of the algorithms. We
investigate interesting families of input.  In particular, we investigate exact
inputs with an exact GCRD, and perturbed differential polynomials with varying
errors and noise introduced. The test cases of differential polynomials of
interest to us have
\begin{itemize}
\item low degree in $t$ and high degree in $\D$ (unbalanced in $\D$),
\item high degree in $t$ and low degree in $\D$ (unbalanced in $t$), and
\item proportional degrees in $t$ and $\D$ (balanced degrees).
\end{itemize}

\subsection{Algorithms for Approximate GCRD}
\label{sec:algs-for-gcrd}

%
%

We adapt techniques from the exact setting to a numerical setting to compute an
exact GCRD numerically.  These algorithms compute the rank of the differential
Sylvester matrix and a least squares solution to a polynomial linear system,
corresponding to the B\'ezout coefficients.  We describe an algorithm for
finding nearby differential polynomials introduced in
\citep{GiesbrechtHaraldson14}, whose (inflated) differential Sylvester matrix is
nearly singular. Using the least squares numeric GCRD algorithm, we can compute
an approximate GCRD candidate from the nearly singular differential Sylvester
matrix. From this candidate, we extract a guess for the co-factors numerically
and proceed with post-refinement Newton iteration.

%

%

\subsubsection{Numerical Computation of a GCRD}
\label{ssec:numeric-gcrd}
%


Before we can compute a GCRD numerically, the rank of the differential Sylvester
matrix needs to be determined.  Our numeric rank algorithm is an adaptation of
the rank algorithm used by \cite{CGTW95}.  There are
\[
(M+N)(\mu+d+1) - (M+N)(\mu+1) = (M+N)d = \mu/2
\]
trivial singular values\footnote{The inflated differential Sylvester matrix has
more columns than rows, however the nullspace of the columns contains the
information pertaining to the GCRD. The trivial singular values are the zero
singular values occuring from there being more columns than rows.}, and $\mu/2
< \mu+d+1$, the column block size. These
trivial singular values need to be accounted for when annihilating small
singular values.  In the full rank case, we should not underestimate the rank of
$S$ by inferring from $\Shat$, as there are strictly fewer trivial singular
values than the column block size.

\begin{algorithm}[!h]
  \caption{\textbf{:} \texttt{DeflatedRank}}
  \label{alg:RankAlg}

 \begin{algorithmic}[1]
   \smallskip \Require
 \item[$\bullet$] An inflated differential Sylvester matrix
 \[
   \Shat \in \RR^{(M+N)(\mu+1)
     \times(M+N)(\mu+d+1)};
  \]
 \item[$\bullet$] A user defined search radius $\epsilon_{rank} >0$ for
comparing singular
   values.

   \smallskip \Ensure
 \item[$\bullet$] The (scaled) numeric rank $\varrho$ of the (non-inflated)
   differential Sylvester matrix $S$.

   \medskip \State Compute the singular values
   $\sigma_1, \sigma_2,\ldots, \sigma_{(M+N)(\mu+d+1)}$ of $\Shat$ in descending
   order.  \State Find the maximum $k$ such that
   $\sigma_k > \epsilon_{rank} \frac{\sqrt{ (M+N)(2\mu+d+2) }}{\mu+d+1}$ and
   $\sigma_{k+1}<\epsilon_{rank}$.  \State if $\sigma_k > \epsilon_{rank}$ for
all $k$ then
   $\Shat$ has full rank.  \State If there is no significant change (there is no
   maximum $k$) between $\sigma_k$ and $\sigma_{k+1}$ for all $k$, as determined
   by step 2 then return
   failure. 
   \State Set $ \varrho =\left \lceil \frac{k}{\mu+d+1} \right \rceil$, the
   scaled rank of $S$.
 \end{algorithmic}
\end{algorithm}

Algorithm~\ref{alg:RankAlg} computes a reasonable guess for the degree in $\D$
of an approximate GCRD, although it is not generally certifiable.
When $\gcrd(f,g)$ is non-trivial (no errors present in the input coefficients),
we compute (generically) the degree of the GCRD of $f$ and $g$. In the exact
setting, we can
now formulate a linear algebra problem over $\RR[t]$ to compute a GCRD. We
present two solutions to this problem. Algorithm~\ref{alg:NumericGCRD} solves
this problem using linear algebra over $\RR(t)$.
Algorithm~\ref{alg:heuristic-quadratic-gcrd} linearizes the problem over $\RR$
and computes a least squares solution.

\begin{algorithm}[!h]
  \caption{\textbf{:} \texttt{NumericGCRD}}
  \label{alg:NumericGCRD}

 \begin{algorithmic}[1]
   \smallskip \Require
 \item[$\bullet$] $f,g \in \RR[t][\D;']$ with $\norm{f}=\norm{g}=1$;
 \item[$\bullet$] A search radius $\epsilon_{rank}>0$.  \smallskip \Ensure
 \item[$\bullet$] $h= \gcrd(f,g) \in \RR[t][\D;']$ with $\degD h\geq 1$,
 \item[$\bullet$] or an indication that $f$ and $g$ are co-prime within search
   radius $\epsilon_{rank}$.  \medskip \State $M\gets \degD f$,
   $N\gets \degD g$, $d\gets \max\{\deg_t f,\deg_t g\}$, $\mu \gets 2(M+N)d$.
   \State $S\gets S(f,g) \in \RR[t]^{(M+N)\times(M+N)}$.  \State Form the
   inflated differential Sylvester matrix\newline
   $\Shat = \Shat(f,g) \in \RR^{ (M+N)(\mu+1) \times (M+N)(\mu+d+1)}$ of $S$.
   \State Compute the numerical rank $\varrho$ of $S$ using
   Algorithm~\ref{alg:RankAlg} on $\Shat$ with search radius $\epsilon_{rank}$.
   \State If $\varrho>0$, then set $\degD h =D= M+N-\varrho$. Otherwise indicate
   that $f$ and $g$ are co-prime with respect to $\varepsilon_{rank}$ and
   return.  \State Solve for $w \in \RR[t][\D;'] ^{ 1\times (M+N)}$ from
   \[
   w S = (*_1, *_2 , \ldots, *_{D+1},
0,\ldots, 0),
   \]
   ensuring that $\norm{\lcoeff_t(*_{D+1})} \gg 0$.  \State Set
   $(h_0,h_1,\ldots, h_D,0,\ldots,0) = wS$.  \State \Return
   $\content(h)^{-1} h $.
 \end{algorithmic}
\end{algorithm}

\begin{algorithm}[!h]
  \caption{\textbf{:} \texttt{NumericGCRDviaLS}}
  \label{alg:heuristic-quadratic-gcrd}
  \begin{algorithmic}[1]
    \smallskip \Require
  \item[$\bullet$] $f,g \in \RR[t][\D;']$ with $\norm{f}=\norm{g}=1$;
  \item[$\bullet$] $\varepsilon_{rank} >0$ used to compute the degree of the
    GCRD.  \Ensure
  \item[$\bullet$] $h \in \RR[t][\D;']$ that is numerically primitive with a
	fixed leading
    coefficient such that $\norm{w S(f,g) - h}_2^2$ is minimized.
    \State
    $M\gets \degD f$, $N\gets \degD g$, $d\gets \max \{ \deg_t f ,\deg_t g\}$,
    $\mu \gets 2(M+N)d$.  \State Compute $D$ using Algorithm~\ref{alg:RankAlg}
    with $\varepsilon_{rank}$.
    \State $\dvec \gets (\underbrace{\mu+d,\ldots, \mu+d}_{D+1},0,\ldots 0)$ (or
    another valid initial guess).
  \item Compute a least squares solution of $h$ from $\norm{w S(f,g) - h}_2$
    with $\dvec(h) = \dvec$ and $\lcoeff_t \lcoeff_\D h =1$.
  \item $\dvec \gets \dvec(\content(h)^{-1} h)$.
  \item Compute a new least squares solution of $h$ from $\norm{w S(f,g) - h}_2$
    with $\dvec(h) = \dvec$ and $\lcoeff_t \lcoeff_\D h =1$.
    \State \Return $h$.
  \end{algorithmic}
\end{algorithm}

In the implementation of Algorithm~\ref{alg:NumericGCRD}, we take special care
to ensure that $\lcoeff_t \lcoeff_\D h$ does not vanish when $h$ is
normalized. If $\lcoeff_t \lcoeff_\D h$ vanishes, then this could be an
indication that the input is ill-conditioned or content removal of $h$
failed. In either case, it is possible that this instance of the approximate
GCRD problem will not have an attainable global minimum in accordance with
Theorem~\ref{thm:global-min}.




\subsubsection{Nearby Differential Polynomials with GCRD Algorithm}

The matrix $\Shat$ is highly structured, as it is composed of block Toeplitz
matrices.  When we consider the matrix $\Shat + \Delta \Shat,$ the nearest 
(unstructured) matrix of prescribed rank deficiency, we have
considerable flexibility in how we recover the coefficients of $\ftil$ and
$\gtil$, nearby differential polynomials with an exact, non-trivial GCRD
as in \eqref{eq:ftilgtil}. However, the matrix $\Shat + \Delta \Shat$ is not
generally an inflated differential Sylvester matrix, but it is probably
reasonably close to one (see \cite{GiesbrechtHaraldson14,Haraldson15}). We
recall that the mapping $\Gamma:\RR[t] \to \RR^{(\mu+1)\times (\mu+d+1)}$
generates the (rectangular) Toeplitz blocks of $\Shat$.  To recover the
coefficients of $\ftil$ and $\gtil$ one must make a suitable definition for
the mapping $\Gamma^{-1}:\RR^{(\mu+1)\times(\mu+d+1)} \to \RR[t]$.
We use $\Gamma^{-1}$ to find $\ftil,\gtil \in \RR[t][\D;']$ such that
$\Shat(\ftil,\gtil) \approx \Shat(f,g) + \Delta \Shat(f,g)$.

\begin{algorithm}[!h]
  \caption{\textbf{:} \texttt{DeflatedPerturbation}}
  \label{alg:deflated-perturbation}
  \begin{algorithmic}[1]
    \smallskip \Require
  \item[$\bullet$] $f,g \in \RR[t][\D;']$ with $\norm{f}=\norm{g}=1$;
  \item[$\bullet$] Perturbed inflated differential Sylvester matrix
    \[
    \Shat + \Delta \Shat \in \RR^{(M+N)(\mu+1)\times (M+N)(\mu+d+1)};
    \]
  \item[$\bullet$] $\Gamma^{-1}:\RR^{(\mu+1)\times (\mu+d+1)}\to \RR[t]$.

    \smallskip \Ensure
  \item[$\bullet$]$\ftil, \gtil \in \RR[t][\D;']$ where
    $\dvec(\ftil) \leq \dvec(f)$ and $\dvec(\gtil) \leq \dvec(g)$.  \medskip
    \State $M\gets \degD f$, $N\gets \degD g$,
    $d\gets \max\{\deg_t f,\deg_t g\}$, $\mu \gets 2(M+N)d$ and
    $N\gets \mu+d+1$.
    \For {$0\leq i \leq \degD f$ } \State
    $[I,J] \gets [1:\mu+1][(i+1)+(i-1)N(\mu+d+1) : (i+1)N(\mu+d+1)]$ \State
    $\ftil_i \gets ~\Gamma^{-1} \left( (\Shat +\Delta \Shat)[I,J] \right)$
    \EndFor
    \For {$0\leq i \leq \degD g$ } \State
    $[I,J] \gets [N(\mu+1)+1 : (N+1)(\mu+1)][(i+1)+(i-1)M(\mu+d+1) :
    (i+1)M(\mu+d+1)]$
    \State
    $\gtil_i \gets ~{\Gamma^{-1} \left( (\Shat +\Delta \Shat)[I,J] )\right)}$
    \EndFor
    \State \Return $\ftil$ and $ \gtil$.
  \end{algorithmic}
\end{algorithm}

Regardless of our choice of $\Gamma^{-1}$, this method of recovering $\ftil$ and
$\gtil$ can lead to a differential Sylvester matrix that does not have the
desired numeric rank, as determined by Algorithm~\ref{alg:RankAlg}. The
perturbation $\Delta \Shat$ is unstructured while $\Gamma (\ftil_i)$ and
$\Gamma (\gtil_j)$ are (highly structured) Toeplitz matrices.  Consequently,
some non-zero terms of $\Delta \Shat$ are ignored.

\begin{algorithm}[!h]
  \caption{\textbf{:} \texttt{NearbyWithGCRD}}
  \label{alg:SVD-GCRD}

 \begin{algorithmic}[1]
   \smallskip \Require
 \item[$\bullet$] $f,g \in \RR[t][\D;']$ with $\norm{f}=\norm{g}=1$;
 \item[$\bullet$] A search radius $\epsilon_{rank}>0$, used to validate the
   degree of $h$.  \Ensure
 \item [$\bullet$] $\ftil,\gtil \in \RR[t][\D;']$ where
   $\dvec( \ftil) \leq \dvec (f)$, $(\dvec(\gtil) \leq \dvec(g)$ and
   $h \approx \gcrd(\ftil,\ftil) \in \RR[t][\D;']$ with $\degD h\geq 1$, or ;
 \item [$\bullet$] An indication that $f$ and $g$ are co-prime within search
   radius $\epsilon_{rank}$.  \medskip \State $M\gets \degD f$,
   $N\gets \degD g$, $d\gets \max\{\deg_t f,\deg_t g\}$ and $\mu \gets 2(M+N)d$.
   \State $S\gets S(f,g) \in \RR[t]^{(M+N)\times(M+N)}$.  \State
   $\Shat \gets \Shat(f,g) \in \RR^{ (M+N)(\mu+1) \times (M+N)(\mu+d+1)}$.
   \State Compute the SVD of $\Shat$, where $\Shat = P\Sigma Q$.
   \State Compute the numerical rank $\varrho$ of $S$ using
   Algorithm~\ref{alg:RankAlg} on $\Shat$ with search radius $\epsilon_{rank}$.
   \State If $\varrho>0$ set the last $\varrho(\mu+d+1)$ singular values to $0$
   and compute $\Sigmabar$.
   Otherwise indicate that $f$ and $g$ are co-prime with respect to
   $\epsilon_{rank}$.  \State Compute $\Shat + \Delta \Shat = P \Sigmabar Q$.
   \State Compute $\ftil$ and $\gtil$ from $\Shat + \Delta \Shat$ using
   Algorithm~\ref{alg:deflated-perturbation}.  \State Compute
   $h =\mathtt{NumericGCRD}(\ftil,\gtil)$ using
   Algorithm~\ref{alg:heuristic-quadratic-gcrd}, with $\epsilon_{rank}$ used to
   validate the degree of $h$ using Algorithm~\ref{alg:RankAlg}.  \State \Return
   $\ftil, \gtil$ and $h$.
 \end{algorithmic}
\end{algorithm}


\subsubsection{Numeric Right Division}

Numeric right division without remainder between two differential polynomials is
a rational function linear algebra problem. The (approximate) quotient is a
solution to a linear system, in a least squares sense.
We present a naive algorithm that works well in practice and a more rigorous linear least squares variant.

The solution
to this system may not be in (approximate) lowest terms. In our implementation we use approximate GCD
and real linear least squares to resolve this.  We note that total least squares
can also be employed to prevent the need of an approximate
GCD computation to put the rational function coefficients in lowest terms.

\begin{algorithm}[!h]
  \caption{\textbf{:} \texttt{NaiveNumericRightDivision}}
  \label{alg:numeric-right-division}

 \begin{algorithmic}[1]
   \smallskip \Require
 \item[$\bullet$] $f,h \in \RR[t][\D;']$ with $\norm{f}=\norm{h}=1$.
   \Ensure \item [$\bullet$] $\fstar \in \RR(t)[\D;']$ satisfying $f=\fstar h$.
   \State $M\gets \degD f$, $D\gets \degD h$.
   \State Form the matrix $\calM(h)$
   from the last $M-D+1$ columns of $\rconv{M-D}(h)$.
\State Solve
   \[
   (f_D,f_{D+1},\ldots,f_M) = (\fstar_0,\fstar_1,\ldots,\fstar_{M-D}) \calM(h)
   \]
   by backwards substitution for the coefficients of $\fstar$.
   \For
   {$0\leq i \leq M-D$} \State $\fstar_i \gets \text{ Approximate $\fstar_i$ in
     rational function least terms}$
   \EndFor
   \State \Return $\fstar$.
 \end{algorithmic}
\end{algorithm}


\begin{algorithm}[!h]
  \caption{\texttt{:} \texttt{NumericRightDivisionViaLS}}
  \label{alg:numeric-right-division-via-ls}
  \begin{algorithmic}[1]
    \smallskip \Require
  \item[$\bullet$] $f,h \in \RR[t][\D;']$ with $\norm{f}=\norm{h}=1$.
    \Ensure
  \item[$\bullet$] $\fstar \in \RR(t)[\D;']$ in lowest terms satisfying
    $f=\fstar h$.
    \State $M\gets \degD f$, $D\gets \degD h$.
    \State Solve
    \[
    v_{-1}(f_0,f_{1},\ldots,f_M) = (v_0,v_1,\ldots,v_{M-D})\rconv{M-D}(h)
    \]
    by linear least squares for the coefficients of $v_{-1},v_0,\ldots, v_{M-D}$.
    \For
   {$0\leq i \leq M-D$}
   \State $\fstar_i \gets \text{ Approximate $\dfrac{v_i}{v_{-1}}$ in
     rational function least terms}$
   \EndFor
   \State \Return $\fstar$.
  \end{algorithmic}
\end{algorithm}


\subsubsection{Improved GCRD via Optimization: Newton's Method}

Using Algorithm~\ref{alg:SVD-GCRD}, we can compute an initial guess for an
approximate GCRD, $h_{init}$. We can perform right division without remainder
numerically to compute initial guesses for the co-factors, $\fstar_{init}$ and
$\gstar_{init}$.  We now have enough information to set up a post-refinement
Newton iteration, to hopefully compute an approximate GCRD.  When the co-factors
have polynomial coefficients, the products $\fstar h$ and $\gstar h$ are always
polynomial. This makes Newton iteration a very straightforward procedure, as the
objective function
\[
\Phi(h,\fstar,\gstar) = \norm{f-\fstar h}_2^2 + \norm{g-\gstar h}_2^2
\]
is easily computed.  However, when the co-factors have rational function
coefficients, the quantities $\fstar h$ and $\gstar h$ usually have rational
function coefficients due to round-off error.
We can clear fractions and compute the
least squares solution of an equivalent associate problem.

\begin{algorithm}[!h]
  \caption{\textbf{:} \texttt{NewtonIteration}}
  \label{alg:Newton}

 \begin{algorithmic}[1]
   \smallskip \Require
 \item[$\bullet$] $f,g,h_{init} \in \RR[t][\D;']$ with
   $\norm{f}=\norm{g}=\norm{h_{init}}=1$;
 \item[$\bullet$] $k \in \NN$, the number of iterations.  \Ensure
 \item[$\bullet$] $\fstar, \gstar \in \RR[t][\D;']$ and $h \in \RR[t][\D;']$
   such that $\Phi( \fstar h, \gstar h)$ is locally minimized and
 \item [$\bullet$] $\dvec(\fstar h) \leq \dvec (f)$ and
   $\dvec(\gstar h) \leq \dvec (g)$.  \State $M\gets \degD f$, $N\gets \degD g$
   and $D\gets \degD h_{init}$.  \State Compute initial guesses of $\fstar$ and
   $\gstar$ using Algorithm \ref{alg:numeric-right-division-via-ls}.  \State
   $\lcoeff_t \lcoeff_\D h \gets \lcoeff_t \lcoeff_\D
   h_{init}$. 
   \State
   $x^0 \gets (\pvec{\fstar_{init}}, \pvec{\gstar_{init}},\pvec{h_{init}})
   ^\tran$.
   \For { $1\leq i\leq k$} \State Solve
   $\nabla ^2 \Phi(x^i) \cdot x^{i+1} = \nabla^2 \Phi(x^i)\cdot x^i - \nabla
   \Phi(x^i)$ for $x^{i+1}$.
   \EndFor
   \State \Return $\fstar$, $\gstar$, and $h$ computed from $x^k$.
 \end{algorithmic}
\end{algorithm}

The normalization we impose, that $\lcoeff_t \lcoeff_\D h$ is fixed, ensures the
solution is (locally) unique, by Corollary~\ref{cor:locally-unique-soln}.  We
note that this normalization can be changed. However one must ensure that the
normalization vector is not orthogonal to
$(\pvec \fstar, \pvec \gstar, \pvec h)$.  We now generalize the Newton iteration
for the instance when the co-factors have rational function coefficients.

\begin{algorithm}[!h]
  \caption{\textbf{:} \texttt{ModifiedNewtonIteration}}
  \label{alg:Modified-Newton}

 \begin{algorithmic}[1]
   \smallskip \Require
 \item[$\bullet$] $f,g,h_{init} \in \RR[t][\D;']$ with
   $\norm{f}=\norm{g}=\norm{h_{init}}=1$;
 \item[$\bullet$] $k \in \NN$, the number of iterations.  \Ensure
 \item[$\bullet$] $\fstar, \gstar \in \RR(t)[\D;']$ and $h \in \RR[t][\D;']$
   such that $\Phi( \fstar h ,\gstar h)$ is locally minimized and
 \item [$\bullet$] $\dvec(\fstar h) \leq \dvec (f)$ and
   $\dvec(\gstar h) \leq \dvec (g)$.  \State $M\gets \degD f$, $N\gets \degD g$
   and $D\gets \degD h_{init}$.  \State Compute initial guesses of $\fstar$ and
   $\gstar$ using Algorithm \ref{alg:numeric-right-division-via-ls}.  \State
   $\lcoeff_t \lcoeff_\D h \gets \lcoeff_t \lcoeff_\D h_{init}$.  \State
   $f\gets \fstar_{-1} f, g\gets \gstar_{-1} g, \fstar \gets \fstar_{-1} \fstar$
   and $\gstar \gets \gstar_{-1} \gstar$.
   \State
   $x^0\gets (\pvec{\fstar_{init}}, \pvec{\gstar_{init}},
   \pvec{h_{init}})^\tran$.
   \For { $1\leq i\leq k$} \State Solve
   $\nabla ^2 \Phi(x^{i+1}) \cdot x^i = \nabla^2 \Phi(x^i)\cdot x^i - \nabla
   \Phi(x^i)$ for $x^{i+1}$.
   \EndFor
   \State $\fstar \gets \frac{1}{\fstar_{-1}} \fstar $ and
   $\gstar \gets \frac{1}{\gstar_{-1}} \gstar$.  \State \Return
   $ \fstar, \gstar $ and $ h$ computed from $x^k$.
 \end{algorithmic}
\end{algorithm}

\subsection{Analysis of Algorithms}
In this section we assess the computational cost in terms of the number of
floating point operations or {\emph {flops}}. Where applicable, we discuss the
numerical stability of the algorithms and whether or not their output can be
certified. The algorithms are analyzed in the order they were presented. The
assumption that content can be removed numerically is not without loss of
generality; content removal can be unstable if implemented poorly.

In our implementation we remove content by (re)formulating our solutions as a
solution to a (total) least squares problem.  This can be done by performing the
SVD on a generalized Sylvester matrix of several univariate polynomials
\citep{KYZ06} to infer the degree of the content.  Computing the degree of the
content this way generalizes the method of \cite{CGTW95} to several
polynomials. In our implementation the only important information is the degree
of an approximate GCD, so we assume that the run-time of approximate GCD is
cubic in the number of variables. One could compute an approximate GCD of
several polynomials and perform a least squares division, however
post-refinement would likely be needed. We generally assume that unstructured
linear algebra techniques are used on the problems, however structured methods
could lead to a modest asymptotic improvement.

\subsubsection{Analysis of Algorithm \ref{alg:RankAlg} -- {\tt DeflatedRank}}

The number of flops Algorithm~\ref{alg:RankAlg} requires is dominated by the
cost of performing the SVD on $\Shat$. The SVD requires
$O( (M+N)^3(\mu+d+1)^3) = O( (M+N)^6 d^3)$ flops, using standard arithmetic.  As
mentioned earlier, this algorithm is generally not certified to produce the
degree of an approximate GCRD.

\subsubsection{Analysis of Algorithm \ref{alg:NumericGCRD} -- {\tt NumericGCRD}}

The number of flops Algorithm~\ref{alg:NumericGCRD} requires is ultimately
bounded by the cost of computing the rank of $S$ using
Algorithm~\ref{alg:RankAlg}. The cost of Algorithm~\ref{alg:RankAlg} is
$O( (M+N)^6d^3)$ flops. The cost of computing a GCRD given the degree in $\D$ is
$O((M+N)^3)$ operations over $\RR(t)$ which corresponds to $O( (M+N)^3d^2)$
flops.  The cost of the approximate GCD and division to remove content depends
on the specific method used, but is usually negligible when compared to the rank
computation.

This algorithm is not numerically stable for large degree inputs in $t$ and
$\D$.  Performing linear algebra over $\RR(t)$ leads to considerable degree
growth in $t$, and removing (approximate) content with a division further
perturbs the coefficients of the GCRD. The output of this algorithm is not
certified to be correct in most instances.

\subsubsection{Analysis of Algorithm \ref{alg:heuristic-quadratic-gcrd} -- {\tt NumericGCRDviaLS}}
There are $(M+N)(\mu+d+1)$ equations and
$(M+N)(\mu+1) + (D+1)(\mu+d+1) = O( (M+N)^2d )$ unknowns.  The cost of computing
the least squares solution is $O( (M+N)^6 d^3)$ flops.  The cost of inferring
the content by looking at singular values of the (generalized) Sylvester matrix
is bounded by $O( (D)^3(\mu+d+1)^3) = O((M+N)^6d^3)$ flops. The total number of
flops required for this algorithm is $O((M+N)^6 d^3)$.

This algorithm relies on solving a real linear least squares problem. As such,
this algorithm is numerically stable, provided that the underlying least squares
problem is reasonably conditioned, and solved in a reasonable way. One such
method of solving the least squares problem is the SVD and arising
pseudo-inverse. We are able to certify the correctness of the answer obtained
via least squares, provided that the underlying approximate GCD algorithm
computes the degree of the content correctly.

\subsubsection{Analysis of Algorithm \ref{alg:deflated-perturbation} -- {\tt DeflatedPerturbation}}
The number of flops Algorithm~\ref{alg:deflated-perturbation} requires is
$O( (M+N)^2 d^2 )$, assuming that $\Gamma^{-1}$ uses the weighted block
average. We use this in our implementation.
This algorithm is not certified to provide meaningful output.

\subsubsection{Analysis of Algorithm \ref{alg:SVD-GCRD} -- {\tt NearbyWithGCRD}}
The number of flops Algorithm~\ref{alg:SVD-GCRD} requires is dominated by the
cost of computing the singular values of $\Shat$, which is $O( (M+N)^6 d^3)$
flops.

This algorithm is exactly the same as Algorithm~\ref{alg:NumericGCRD} when
$\Shat$ has the desired rank deficiency. In the event that the input is
approximate, the quality of our answer depends on the largest singular value of
$\Shat$ that we annihilate.
This algorithm is not certified to provide meaningful output, but if used in
conjunction with Algorithm~\ref{alg:heuristic-quadratic-gcrd}, the output can be
certified as a least squares approximation to the solution of the B\'ezout
coefficients.

\subsubsection{Analysis of Algorithm \ref{alg:numeric-right-division} -- {\tt NaiveNumericRightDivision}}
The number of flops Algorithm~\ref{alg:numeric-right-division} requires depends
on the method used to solve the linear system.  The particular system is highly
structured so we can solve it by backwards substitution directly, which costs
$O( (M-D)^2)$ operations over $\RR(t)$. This corresponds to $O( (M-D)^2 d^2 )$
flops. An upper bound on the degree required for approximate GCD computations is
$ (M-D+1)d$. The total cost of each approximate GCD computation is at most
$O(((M-D)d)^3)$ flops. There are at most $M-D+1$ approximate GCD
computations performed, so the total cost of the algorithm is
$O( (M-D)^4d^3)$
flops.

The output of this answer is generally only certifiable if the residual of a least squares division is zero, i.e. the coefficients are exact.
If we
assume that $\lcoeff_t \lcoeff_\D h=1$ and $\norm{h}$ is not arbitrarily large,
then the backwards substitution is well conditioned.  The approximate GCD
computations and following divisions can perturb the coefficients, so the
algorithm can be unstable for poorly conditioned inputs. This is especially
problematic when $\lcoeff_\D h$ is poorly conditioned.

%
%
%

\subsubsection{Analysis of Algorithm \ref{alg:numeric-right-division-via-ls} --
  {\tt NumericRightDivisionViaLS}}

If $\fstar$ has polynomial coefficients, then
$\deg_t \fstar \leq \deg_t f\leq d$ as $f$ and $h$ have polynomial coefficients
as well. If $\fstar$ has rational function coefficients, we recall from Section~\ref{ssec:diff-division} that there are
$O(M(M-D)d)$ equations and $O( (M-D)^2d)$ unknowns.
%
The cost of
solving this linear least squares problem is
$O( (M(M-D)d)^3) \subseteq O(M^6d^3)$ flops.

The output of this algorithm is certified as a linear least squares
solution. Like Algorithm~\ref{alg:numeric-right-division}, the conditioning of
this algorithm is strongly related to the conditioning of $h_D$.

\subsubsection{Analysis of Algorithms
  \ref{alg:Newton}--\ref{alg:Modified-Newton} -- {\tt NewtonIteration} \\ and {\tt
    ModifiedNewtonIteration}}

We transform the problem of computing a GCRD to that of optimizing
$\Phi:\RR[t][\D;']\times \RR(t)[\D;']^2 \to \RR$. We can assume without loss of generality that
$\fstar$ and $\gstar$ have polynomial coefficients, as we can solve an equivalent associate problem instead.
The dominating cost of the Newton iteration is solving a linear system to get
the next value which requires $O(\nu ^3)$ operations, where $\nu$ is the number
of variables needed to represent the coefficients of $h,\fstar$ and $\gstar$.

Newton iteration can fail for many reasons, (it is, afterall, a locally
convergent method) however our Newton iteration usually
fails because:
\begin{itemize}
\item [1.] $\nabla^2 \Phi$ is positive semidefinite at a point in the iteration,
  the stationary point is a saddle point;
\item [2.] The initial guess is poorly chosen and $\nabla^2 \Phi$ is indefinite
  at a point.
\end{itemize}
In the event that Newton iteration fails we can perform a Gauss-Newton iteration
instead.  Despite Gauss-Newton iteration having at least linear convergence, $J^TJ$ is
positive definite, so saddle points are no longer a problem if the optimal
residual is sufficiently small.
According to Corollary~\ref{cor:locally-unique-soln}, if the residual is
sufficiently small then Newton iteration will converge to a global minimum.
\subsection{Examples and Experimental Results} \label{sec:experimental-data}

This section contains some examples of our
implementation.%
\footnote{A proof-of-concept implementation of the algorithms is available
at \url{https://www.scg.uwaterloo.ca/software/ApproxOreFoCM-2019.tgz}.}  The (inflated)
differential Sylvester matrix is ill-conditioned for large degree inputs in $t$
and $\D$. This ill-conditioning occurs because the columns (rows) become
unbalanced due to the falling factorials, where some columns have a Frobenius
norm factorially larger than others.  We restrict ourselves to modest
examples with minimal coefficient
growth. Computations are done using the default precision in \texttt{Maple},
which is approximately 10 decimal points of accuracy.


\begin{example}[No Noise, many factors]
  \begin{align*}
    f= &   .00769\D^5+(.00035 t^2+.05386 t-.05386)\D^4 \\+
       &(.00140 t^3+.06820 t^2-.16928 t+.17313)\D^3\\+
       &(-.09513 t^3+.22559 t^2+.16928 t-.33472)\D^2\\+
       &(.18607 t^3-.65720 t^2-.04617 t+.32702)\D\\ +
       &(-.09234 t^3+.36305 t^2 -.00769 t-.11927). \\
    g= & (.01001 t-.01001)\D^5+(.04019 t^2-.07007 t+.03003)\D^4\\+
       &(.00063 t^3-.01048 t^2+.15014 t-.11010)\D^3\\+
       &(.27901 t^3-.32921 t^2-.09008 t+.17016)\D^2\\+
       &(-.55990 t^3+.52909 t^2-.04004 t-.08007)\D  \\+
       &(.28026 t^3-.22959 t^2+.04004 t).
  \end{align*}

  \noindent
  We compute initial guesses (removing content numerically where appropriate):
  \begin{align*}
    h_{guess} 	=&  .09285\D^3+(.37139t-.27854)\D^2+(-.74278t+.27854)\D+(.37139t-.09285) ,\\
    \fstar_{guess}=&  .08287\D^2+(.00377t^2+.24862t-.33150)\D +\\
                 & (-2.05844\times10^{-10}t^3-.24862t^2+.91162t-.04144), \\
    \gstar_{guess}=&  (.10780t-.10780)\D^2+(.00168t^2+8.67540 \times 10^{-9}t-2.71283\times10^{-9})\D \\ +
                 & (2.35115\times 10^{-8}t^3+.75463t^2-.43122t+6.78976\times10^{-8}).
  \end{align*}
  The quality of this initial guess is
  \[
  \norm{f-\fstar_{guess}h_{guess}}_2^2 +\norm{g-\gstar_{guess}h_{guess}}_2^2 =
  4.04506 \times 10^{-14}.
  \]
  The condition number for the Hessian matrix evaluated at our initial guess is
  $18354.38336$ and our smallest eigenvalue is $.00314$. Since $\nabla^2 \Phi$
  is locally positive definite, we know that we will converge to a unique
  (local) minimum.  The minimum we converge to is $2.33030\times 10^{-20}$.

  The exact GCRD in this example is $h= (\D+4t-1)(\D-1)(\D-1)$.
\end{example}

\begin{example}[Noise]
  In this example we introduced normalized noise of size $10^{-5}$ to $f$ and
  $g$.
  \begin{align*}
    f  = &
           .00583\D^5\\
    +&(-9.45614\times 10 ^{-7}t^3+.00027t^2+.03498t-.03498)\D^4\\
    +&(-8.26797\times 10 ^{-7}t^5+.04743t^3+.01113t^2-.05247t+.07287)\D^3\\
    +&(-9.08565\times 10 ^{-8}t^5+.13885t^4-.21623t^3+.30950t^2-.17781t-.05247)\D^2\\
    +&(-.18655t^5-.02226t^4-.20166t^3-.41974t^2+.33812t-.10202)\D\\
    +&(.18655t^5-.30315t^4+.43935t^3-.22868t^2-.13117t+.15740).\\
    g  = &
           (.00780t-.00779)\D^5\\
    +&(9.10928\times 10 ^{-7}t^5+6.83196\times 10 ^{-7}t^3+.02351t^2-.07018t+.02729)\D^4\\
    +&(5.94796\times 10 ^{-8}t^4+.02376t^3-.07822t^2+.12086t-.06238)\D^3\\
    +&.16326t^4+.04654t^3-.27267t^2+.12476t+.03898)\D^2\\
    +&(-.21833t^5-.10868t^4-.05617t^3+.63939t^2-.38597t+.14036)\D\\
    +&(.21833t^5-.27291t^4-.01462t^3-.09418t^2+.24952t-.12086).
  \end{align*}

  \noindent
  We compute initial guesses (removing content numerically where appropriate):

  \begin{align*}
    h_{guess} = 		 & .11192\D^3 +(.33514t-.22357)\D^2 \\+
                                 & (-.44667t^2-.22358t-.11327)\D+.44754t^2-.55869t+.22453 ,\\
    \fstar_{guess} =	 & (5.97992\times10^{-8}t^5-.00001t^4+.05212-8.67362\times10^{-18}t^2+5.20417\times10^{-18}t)\D^2\\ +
                                 & (-.0001t^5-.00002t^4-.00001t^3+.00238t^2+.15646t-.20842)\D \\ +
                                 & (.00003t^5- +.41663t^3-.15629t^2+.57193t-.02463),\\
    \gstar_{guess} =	 & (-2.10091\times10^{-8}t^5+-6.93889\times 10^{-18}t^3-1.73472\times10^{-17}t^2+.06967t-.06963)\D^2\\
    +&(.00002t^4+.00001t^3+.00146t^2-.27937t+.10474)\D\\+
                                 & (-.00004t^5+.00002t^4+.48596t^3+.00189t^2-.27763t-.00158).
  \end{align*}
  The quality of this initial guess is
  \[
  \norm{f-\fstar_{guess}h_{guess}}_2^2 +\norm{g-\gstar_{guess}h_{guess}}_2^2 =
  .00003.
  \]
  The condition number for the Hessian matrix evaluated at our initial guess is
  $21971.20356$ and our smallest eigenvalue is $.00818$. Since $\nabla^2 \Phi$
  is locally positive definite, we know that we will converge to a unique
  (local) minimum.  The minimum we converge to is $1.06759\times10^{-10}$, which
  is roughly the amount of noise we added.
\end{example}

\begin{example}[GCRD via LS]
  In this example we added a noise factor of $10^{-4}$ to $f$ and
  $g$. Performing Linear Algebra over $\RR(t)$ produced completely unacceptable
  answers, so we used a Least Squares algorithm to compute an approximate GCRD.
  \begin{align*}
    f = & (.11329t^6+.23414t^5+.12840t^4+.00755t^3+.00005)\D^3 \\
    +&(.00001t^6+.23414t^5+.59667t^4+.02269t^3-.04528t^2-.02266t+3.67436\times 10^{-7})\D^2 \\
    +&(-.11329t^6+.33231t^5-.43054t^4-.00754t^3-.00003t^2-.06798t+.00003)\D \\
        &(-.00001t^6-.23414t^5+.34741t^4+.01510t^3-.06799t^2+.09064t+.00004). \\
    g = &(.01938t^4-.03876t^3-.07752t^2+.03876t+.05819)\D^3 \\
    +&(.13567t^4+.23252t^3-.07750t^2-.34879t+.29066)\D^2\\
    +&(-.01938t^4+.13563t^3+.03873t^2+.25195t-.23257)\D\\
    +&(-.13562t^4+.44570t^3-.56198t^2-.03874t+.17439).
  \end{align*}

  \noindent
  Using a Least Squares variant of our Numeric GCRD algorithm, we are able to
  compute (without removing content):
  \begin{align*}
    h_{guess} =  	   & (t^2+1.94162t+.93768)\D^2 + 2.87182\D \\ +&(-.94502t^2+2.84696t-3.82712), \\
    \fstar_{guess} = &(.00712t^6-.01655t^5+.71917t^4+.05630t^3-.00048t^2-.00199t+.00155)\D \\+
                           & (-.00061t^6-.01248t^5+.02319t^4+.04309t^3-.02430t^2-.12387t-.00820), \\
    \gstar_{guess } =& (.00041t^4-.00715t^3+.13885t^2-.50330t+.36942)\D\\+
                           & (.00381t^4-.01139t^3+.86465t^2-.48856t+.00231) .
  \end{align*}
  The quality of this initial guess is
  \[
  \norm{f-\fstar_{guess}h_{guess}}_2^2 +\norm{g-\gstar_{guess}h_{guess}}_2^2 =
  .00328.
  \]
  The condition number for the Hessian matrix evaluated at our initial guess is
  $148.62547$ and our smallest eigenvalue is $.04615$. Since $\nabla^2 \Phi$ is
  locally positive definite, we know that we will converge to a unique (local)
  minimum.  The minimum we converge to is $9.53931 \times 10^{-9}$.
\end{example}

\subsection{General Examples}
We provide results that demonstrate the robustness of our algorithms.  We
consider differential polynomials whose degrees in $t$ and $\D$ are balanced and
unbalanced. The coefficients of the inputs were generated using the Maple
routine \texttt{randpoly()}. The inputs $f$ and $g$ were normalized so that
$\norm{f}=\norm{g}=1$. We introduced normalized noise to the coefficients of $f$
and $g$, so that the relative error is size of the perturbation. Precisely, if
$f+\Delta f$ and $g+\Delta g$ are perturbed from $f$ and $g$ by the quantities
$\Delta f$ and $\Delta g$, then the relative error in the coefficients of $f$
and $g$ is given by $\norm{\Delta f}_2= \norm{\Delta g}_2$.

We recall that the Newton iteration optimizes
$\norm{f-\ftil}_2^2 + \norm{g-\gtil}_2^2$, which is the {\emph{sum of the
    squares}} of the errors. The initial error and error from post-refinement
are expressed as the sum of square errors accordingly. In our
experiments, the \emph{Initial Error} is the
quantity $\norm{f-f_{init}}_2^2 +
\norm{g-g_{init}}_2^2$ and the error after post-refinement, \emph{Newton
Error} is the
quantity $\norm{f-f_{opt}}_2^2 +
\norm{g-g_{opt}}_2^2$. In all of the examples
when there were no perturbations in the coefficients of $f$ and $g$, our numeric
GCRD algorithm and post-refinement procedures were able to compute an exact GCRD
to machine precision.  When perturbations imposing a relative error of $10^{-8}$
in the coefficients of $f$ and $g$ were introduced, we were able to compute a
solution to the approximate GCRD problem in every example.

Introducing perturbations imposing a relative error of order $10^{-4}$ and
$10^{-2}$ into the coefficients of $f$ and $g$ prevented computation of an
approximate GCRD in some examples. Instead, we provide examples of the largest
perturbation in the coefficients of $f$ and $g$ that we were able to compute an
approximate GCRD. Instances that are denoted as ``FAIL'' occur when the
post-refinement did not converge. The implementation of Newton's method is not
globalized to converge to a stationary point, hence the iterates may diverge.
In our examples iterates diverge because the Hessian matrix is indefinite at an
initial guess.

\subsubsection{Balanced Degrees in $t$ and $\D$}
The following results of experiments were conducted on differential polynomials
whose degrees in $t$ and $\D$ were proportional, or balanced.

\medskip\noindent
\begin{tabular}{cccccc}
  Example& Input  $(\D, t)$ & GCRD  $(\D,t)$ & Noise    & Initial Error  & Newton Error \\ \hline
  1		& (2,2)        & (1,1)       & 1e-2 &  2.63579e-3               & 9.37365e-5  \\
  2		& (2,2)        & (1,1)       & 1e-2 &  6.98136e-4               & 8.96068e-5\\
  3		& (3,2)        & (2,1)       & 1e-2 &  1.69968e-2               & 1.26257e-4\\
  4		& (3,4)        & (2,2)       & 1e-2 &  3.8269e-3                & 1.04271e-4\\
  5		& (4,4)        & (3,2)       & 1e-2 &  3.15314e-1               & FAIL \\ \hline
  5		& (4,4)        & (3,2)       & 1e-4 &  9.29336e-7               & 8.97294e-9
\end{tabular}

\subsubsection{Unbalanced Degrees in $\D$ }
The following results of experiments were conducted on differential polynomials
whose degrees in $\D$ were relatively larger than their degree in $t$.

\medskip\noindent
\begin{tabular}{cccccc}
  Example & Input  $(\D, t)$ & GCRD  $(\D,t)$ & Noise    & Initial Error  & Newton Error \\ \hline
  1& 	 (2,2)      	   & (1,1)          & 1e-2     & 1.13109e-3    
  &2.90713e-5                 \\
  2& 	 (3,2)       	   & (2,1)          & 1e-2     & 6.72179e-4     &1.13998e-4	            \\
  3& 	 (4,2)       	   & (3,1)          & 1e-2     & 3.00365e-4     &1.04038e-4                \\
  4& 	 (5,2)       	   & (4,1)          & 1e-2     & 9.01982e-4     &1.23557e-4               \\
  5& 	 (6,2)       	   & (5,1)          & 1e-2     & 6.61552e-3	& FAIL		\\ \hline
  5& 	 (6,2) 		   & (5,1)          & 1e-4     & 2.74084e-4     & 
1.12566e-8
\end{tabular}

\subsubsection{Unbalanced Degrees in $t$ }
The following results of experiments were conducted on differential polynomials
whose degrees in $t$ were relatively larger than their degree in $\D$.

\medskip\noindent
\begin{tabular}{cccccc}
  Example & Input  $(\D, t)$ & GCRD  $(\D,t)$ & Noise    & Initial Error  & Newton Error \\ \hline
  1& 	 (2,3)      	   & (1,2)          & 1e-2     & 1.27092e-2     &1.43153e-4                \\
  2& 	 (2,6)       	   & (1,4)          & 1e-2     & 5.04286e-1     &FAIL	            \\ \hline
  2&	 (2,6)		   & (1,4)	    & 1e-4     & 7.78993e-4	&1.31180e-8 \\
  3& 	 (2,8)       	   & (1,6)          & 1e-4     & 6.9361e-2 	&FAIL                \\ \hline
  3& 	 (2,8)       	   & (1,6)          & 1e-8     & 3.92268e-10    &1.15653e-16                \\
  4& 	 (2,11)       	   & (1,8)          & 1e-8     & 6.20749e-10  	&1.26549e-16           \\
  5& 	 (2,13)       	   & (1,10)         & 1e-8     & 2.23588e-10	&1.03136e-16
\end{tabular}


\section{Conclusion}


In this paper we have formally defined an approximate GCRD problem for
differential polynomials, and given an approach to a robust numerical
solution. We have seen that, under reasonable assumptions the approximate GCRD
problem is well posed. In particular, we show that Newton iteration will
converge to an optimal solution if the residual is sufficiently small. We employ
the earlier results in \citep{GiesbrechtHaraldson14}, analogous to SVD-based
approximate GCD methods like \cite{CGTW95}, to compute a reasonable initial
estimate for the Newton iteration. The results were presented for real
differential polynomials, however the results generalize in a very straight
forward way to the instance of complex differential polynomials.

We believe that some aspects of our problems could also be approached from a
structured low-rank approximation viewpoint \cite{KYZ05, schost15}. In
particular, the work of \cite{schost15} can be used to obtain an initial
low-rank differential Sylvester matrix in which co-factors and a GCRD can be
extracted for post-refinement.  This holds more generally than differential
polynomials, and a particular example to consider is the shift operator,
commonly associated with linear difference equations.

Another area of future work is in the certification of the degree of an
approximate GCRD.  We can obtain a reasonable guess by enumerating over the
degrees of all possible approximate GCRDs, similar to the Structured Total Least
Norm approach adopted for multivariate polynomial approximate GCD
\cite{KYZ06}. A possible direction would be to look at the differential
subresultant sequence and the singular values of their inflated block matrices
\cite{EmiGalLom97}.

The differential polynomials defined in this paper are special case of more
general Ore polynomials, which have broader application in the solution of
differential and difference equations. In particular, we could potentially apply
our methods in the context of $q$-differentiation (Jackson differentiation) or
derivations on exponential polynomials. Ultimately, any Ore structure will have
a well-defined Sylvester-like matrix (see, e.g., \cite{GieKim13}).  However, the
numerical properties of different derivations may well be quite difficult or
even problematic, and may well introduce poles or other significant sources of
numerical instability.

We also hope, the results of this paper are a foundation for extending the
approximate polynomial toolbox to other problems with differential polynomials
and more general linear differential operators.  Much like approximate GCD, the
approximate GCRD is both a stepping stone and a key tool towards operations like
approximate factorization and (functional) solution of differential polynomials.
More immediately, computation of an approximate GCRD enables computation of a
corresponding approximate LCLM, and multiple GCRD's, and to multiple
differential variables (i.e., iterated Ore polynomials), which provide an
effective method for dealing with linear PDEs.


\section*{Acknowledgements}
The authors would like to thank George Labahn for
his comments.  The authors would also like to thank the two anonymous referees
for their careful reading and comments.

\bibliographystyle{plainnat}


\begin{thebibliography}{31}
\providecommand{\natexlab}[1]{#1}
\providecommand{\url}[1]{\texttt{#1}}
\expandafter\ifx\csname urlstyle\endcsname\relax
  \providecommand{\doi}[1]{doi: #1}\else
  \providecommand{\doi}{doi: \begingroup \urlstyle{rm}\Url}\fi

\bibitem[Abramov et~al.(2005)Abramov, Le, and Li]{AbrLeLi05}
S.~Abramov, H.~Le, and Z.~Li.
\newblock Univariate {O}re polynomial rings in computer algebra.
\newblock \emph{J. Math. Sci.}, 131\penalty0 (5):\penalty0 5885–5903, 2005.

\bibitem[Bell et~al.(2017)Bell, Heinle, and Levandovskyy]{BelHeiLev17}
J.~Bell, A.~Heinle, and V.~Levandovskyy.
\newblock On noncommutative finite factorization domains.
\newblock \emph{Trans. AMS}, 369:\penalty0 2675--2695, 2017.

\bibitem[Botting et~al.(2005)Botting, Giesbrecht, and May]{BGM05}
B.~Botting, M.~Giesbrecht, and J.P. May.
\newblock Using the {R}iemannian {SVD} for problems in approximate algebra.
\newblock In \emph{Proc. Workshop on Symbolic-Numeric Computation (SNC'05)},
  pages 209--219, 2005.

\bibitem[Boyd and Vandenberghe(2004)]{Boyd2004}
S.~Boyd and L.~Vandenberghe.
\newblock \emph{Convex Optimization}.
\newblock Cambridge University Press, New York, NY, USA, 2004.
\newblock ISBN 0521833787.

\bibitem[Bronstein and Petkov\v{s}ek(1994)]{BroPet94}
M.~Bronstein and M.~Petkov\v{s}ek.
\newblock On {O}re rings, linear operators and factorisation.
\newblock \emph{Programmirovanie}, 20:\penalty0 27--45, 1994.

\bibitem[Bronstein and Petkov\v{s}ek(1996)]{BroPet96}
M.~Bronstein and M.~Petkov\v{s}ek.
\newblock An introduction to pseudo-linear algebra.
\newblock \emph{Theoretical Computer Science}, 1996.

\bibitem[Corless et~al.(1995)Corless, Gianni, Trager, and Watt]{CGTW95}
R.~M. Corless, P.~M. Gianni, B.~M. Trager, and S.~M. Watt.
\newblock The singular value decomposition for polynomial systems.
\newblock In \emph{Proc. International Symposium on Symbolic and Algebraic
  Computation (ISSAC'95)}, pages 189--205, 1995.

\bibitem[Emiris et~al.(1997)Emiris, Galligo, and Lombardi]{EmiGalLom97}
I.~Z. Emiris, A.~Galligo, and H.~Lombardi.
\newblock Certified approximate univariate {GCDs}.
\newblock \emph{Journal of Pure and Applied Algebra}, 117--118:\penalty0
  229--251, 1997.
\newblock ISSN 0022-4049.
\newblock \doi{10.1016/S0022-4049(97)00013-3}.

\bibitem[\Gathen{von zur Gathen} and Gerhard(2013)]{MCA3}
J.~\Gathen{von zur Gathen} and J.~Gerhard.
\newblock \emph{Modern Computer Algebra}.
\newblock Cambridge University Press, New York, NY, USA, 3 edition, 2013.

\bibitem[Giesbrecht and Haraldson(2014)]{GiesbrechtHaraldson14}
M.~Giesbrecht and J.~Haraldson.
\newblock Computing {GCRD}s of approximate differential polynomials.
\newblock In \emph{Proc. Symposium on Symbolic-Numeric Computation (SNC '14)},
  pages 78--87, 2014.

\bibitem[Giesbrecht and Kim(2013)]{GieKim13}
M.~Giesbrecht and M.~Kim.
\newblock Computing the {H}ermite form of a matrix of {O}re polynomials.
\newblock \emph{Journal of Algebra}, 376:\penalty0 341--362, 2013.

\bibitem[Giesbrecht et~al.(2016)Giesbrecht, Heinle, and
  Levandovskyy]{GieHeiLev16}
M.~Giesbrecht, A.~Heinle, and V.~Levandovskyy.
\newblock Factoring linear partial differential operators in {$n$} variables.
\newblock \emph{Journal of Symbolic Computation}, 75:\penalty0 127--148, 2016.

\bibitem[Grigor'ev(1990)]{grigorev90}
D.~Grigor'ev.
\newblock Complexity of factoring and calculating the {GCD} of linear ordinary
  differential operators.
\newblock \emph{Journal of Symbolic Computation}, 10\penalty0 (1):\penalty0
  7--37, 1990.

\bibitem[Haraldson(2015)]{Haraldson15}
J.~Haraldson.
\newblock Computing approximate {GCRDs} of differential polynomials.
\newblock Master's thesis, University of Waterloo, 2015.

\bibitem[Heinle and Levandovskyy(2016)]{HeiLev16}
A.~Heinle and V.~Levandovskyy.
\newblock A factorization algorithm for g-algebras and applications.
\newblock In \emph{Proc. International Symposium on Symbolic and Algebraic
  Computation (ISSAC 16)}, pages 263--270. ACM Press, 2016.

\bibitem[Kaltofen et~al.(2005)Kaltofen, Yang, and Zhi]{KYZ05}
E.~Kaltofen, Z.~Yang, and L.~Zhi.
\newblock Structured low rank approximation of a {Sylvester} matrix.
\newblock In \emph{Proc. Workshop on Symbolic-Numeric Computation (SNC'05)},
  pages 69--83, 2005.

\bibitem[Kaltofen et~al.(2006)Kaltofen, Yang, and Zhi]{KYZ06}
E.~Kaltofen, Z.~Yang, and L.~Zhi.
\newblock Approximate greatest common divisors of several polynomials with
  linearly constrained coefficients and singular polynomials.
\newblock In \emph{Proc. International Symposium on Symbolic and Algebraic
  Computation (ISSAC'06)}, pages 169--176, 2006.

\bibitem[Kaltofen et~al.(2007{\natexlab{a}})Kaltofen, Yang, and Zhi]{KYZ07}
E.~Kaltofen, Z.~Yang, and L.~Zhi.
\newblock Structured low rank approximation of a sylvester matrix.
\newblock In \emph{Symbolic-Numeric Computation}, Trends in Mathematics, pages
  69--83, Basel, Switzerland, 2007{\natexlab{a}}. Birkh{\"a}user Verlag.

\bibitem[Kaltofen et~al.(2007{\natexlab{b}})Kaltofen, Yang, and Zhi]{KYZ07b}
E.~Kaltofen, Z.~Yang, and L.~Zhi.
\newblock Approximate greatest common divisors of several polynomials with
  linearly constrained coefficients and singular polynomials.
\newblock Unpublished manuscript, 2007{\natexlab{b}}.

\bibitem[Karmarkar and Lakshman(1996)]{KarLak96}
N.~Karmarkar and Y.~N. Lakshman.
\newblock Approximate polynomial greatest common divisors and nearest singular
  polynomials.
\newblock In \emph{Proc. International Symposium on Symbolic and Algebraic
  Computation (ISSAC'96)}, pages 35--39, 1996.

\bibitem[Karmarkar and Lakshman(1998)]{KarLak98}
N.~Karmarkar and Y.~N. Lakshman.
\newblock On approximate {GCD}s of univariate polynomials.
\newblock \emph{Journal of Symbolic Computation}, 26\penalty0 (6):\penalty0
  653--666, 1998.

\bibitem[Li(1998)]{Li98}
Z.~Li.
\newblock A subresultant theory for {O}re polynomials with applications.
\newblock In \emph{Proc. International Symposium on Symbolic and Algebraic
  Computation (ISSAC'98)}, pages 132--139. ACM, 1998.

\bibitem[Li and Nemes(1997)]{Li97}
Z.~Li and I.~Nemes.
\newblock A modular algorithm for computing greatest common right divisors of
  {O}re polynomials.
\newblock In \emph{Proc. International Symposium on Symbolic and Algebraic
  Computation (ISSAC'97)}, pages 282--289, 1997.

\bibitem[Ore(1933)]{Ore33}
O.~Ore.
\newblock Theory of non-commutative polynomials.
\newblock \emph{Annals of Mathematics. Second Series}, 34:\penalty0 480--508,
  1933.

\bibitem[Rudin(1976)]{rudin76}
W.~Rudin.
\newblock \emph{Principles of Mathematical Analysis}.
\newblock International series in pure and applied mathematics. McGraw-Hill,
  1976.
\newblock ISBN 9780070856134.

\bibitem[Salvy and Zimmermann(1994)]{SalZim94}
B.~Salvy and P.~Zimmermann.
\newblock Gfun: a {M}aple package for the manipulation of generating and
  holonomic functions in one variable.
\newblock \emph{ACM Trans. Math. Software}, 20\penalty0 (2):\penalty0 163--177,
  1994.

\bibitem[Sasaki and Sasaki(1997)]{SasSas97}
T.~Sasaki and M.~Sasaki.
\newblock Polynomial remainder sequence and approximate {GCD}.
\newblock \emph{ACM SIGSAM Bulletin}, 31:\penalty0 4--10, 1997.

\bibitem[Sch{\"o}nhage(1985)]{Schonhage85}
A.~Sch{\"o}nhage.
\newblock {Quasi-GCD} computations.
\newblock \emph{J. Complexity}, 1:\penalty0 118--137, 1985.

\bibitem[Schost and Spaenlehauer(2016)]{schost15}
{\'E}.~Schost and PJ. Spaenlehauer.
\newblock A quadratically convergent algorithm for structured low-rank
  approximation.
\newblock \emph{Foundations of Computational Mathematics}, 16\penalty0
  (2):\penalty0 457--492, 2016.

\bibitem[Zeng(2011)]{Zeng11}
Z.~Zeng.
\newblock The numerical greatest common divisor of univariate polynomials.
\newblock In \emph{Randomization, Relaxation, and Complexity in Polynomial
  Equation Solving}, volume 556 of \emph{Contemporary Mathematics}, pages
  187--217. ACM Press, 2011.

\bibitem[Zeng and Dayton(2004)]{ZenDay04}
Z.~Zeng and B.~H. Dayton.
\newblock The approximate {GCD} of inexact polynomials.
\newblock In \emph{Proc. International Symposium on Symbolic and Algebraic
  Computation (ISSAC'04)}, pages 320--327, 2004.

\end{thebibliography}

\newcommand{\Gathen}{\relax}


\end{document}